\newcolumntype{?}{!{\vrule width 2pt}}	
\newcommand{\N}{\mathbb{N}}		
\newcommand{\R}{\mathbb{R}}		
\newcommand{\transp}{\mathsf{T}}					
\newtheorem{theor}{Theorem}
\theoremstyle{definition}
\newtheorem{defin}{Definition}
\newcommand*{\xdash}[1][3em]{\rule[0.5ex]{#1}{0.55pt}}
\begin{document}

\title{Functional observability and subspace reconstruction in nonlinear systems}

\author{Arthur~N.~Montanari}
\email{arthur.montanari@uni.lu}
\affiliation{Luxembourg Centre for Systems Biomedicine, University of Luxembourg, 7 avenue des Hauts-Fourneaux, L-4362 Esch-sur-Alzette, Luxembourg}

\author{Leandro~Freitas}
\affiliation{Department of Industrial Automation and Information Technology, Instituto Federal de Educa\c{c}\~ao, Ci\^encia e Tecnologia de Minas Gerais, Campus Betim, Rua Itagua\c cu 595, 32677-562 Betim, MG, Brazil}

\author{Daniele Proverbio}
\affiliation{Luxembourg Centre for Systems Biomedicine, University of Luxembourg, 7 avenue des Hauts-Fourneaux, L-4362 Esch-sur-Alzette, Luxembourg}
\affiliation{College of Engineering, Mathematics and Physical Sciences, University of Exeter, EX4 4QL, Exeter, UK}

\author{Jorge Gonçalves}
\affiliation{Luxembourg Centre for Systems Biomedicine, University of Luxembourg, 7 avenue des Hauts-Fourneaux, L-4362 Esch-sur-Alzette, Luxembourg}
\affiliation{Department of Plant Sciences, University of Cambridge, CB2 3EA, Cambridge, UK}

\date{\today}
	
\begin{abstract}
Time-series analysis is fundamental for modeling and predicting dynamical behaviors from time-ordered data, with applications in many disciplines such as physics, biology, finance, and engineering. Measured time-series data, however, are often low dimensional or even univariate, thus requiring embedding methods to reconstruct the original system's state space. The observability of a system establishes fundamental conditions under which such reconstruction is possible. However, complete observability is too restrictive in applications where reconstructing the entire state space is not necessary and only a specific subspace is relevant. Here, we establish the theoretic condition to reconstruct a nonlinear functional of state variables from measurement processes, generalizing the concept of functional observability to nonlinear systems. When the functional observability condition holds, we show how to construct a map from the embedding space to the desired functional of state variables, characterizing the quality of such reconstruction. The theoretical results are then illustrated numerically using chaotic systems with contrasting observability properties. By exploring the presence of functionally unobservable regions in embedded attractors, we also apply our theory for the early warning of seizure-like events in simulated and empirical data. The studies demonstrate that the proposed functional observability condition can be assessed \textit{a priori} to guide time-series analysis and experimental design for the dynamical characterization of complex systems.
\end{abstract}

\keywords{Observability, embedding, time series analysis, nonlinear systems.}

\maketitle

\section{Introduction}
\label{sec.intro}

%

Reconstructing the state space of complex dynamical systems is a key step for their quantitative understanding and forecasting. To do so, time-series analysis methods have been developed to characterize and model dynamical behaviors from recorded time-ordered data. In practice, such methods are constrained by the available measurement processes and data quality: data are often irregularly sampled, noisy, relatively short, and univariate. In cases that measured time series are expected to be lower dimensional compared to the system dynamics, the original state space can be reconstructed with embedding methods \cite{sauer1991embedology,Marwan2009,Lekscha2018,Bhat2022}. 
Embedding has been successfully applied across many fields, including for the characterization of chaotic dynamics \cite{kennel1992method,Marwan2007}, ecological and economic modeling \cite{Sugihara2012,Groth2017}, financial forecasting \cite{Araujo2019,Gidea2020},
medical diagnosis \cite{Carvalho2018,PerezToro2020}, and detection of dynamical transitions in palaeoclimate data \cite{Lekscha2020} and oil-water flows \cite{Gao2013}. 
These applications assume that an embedding is possible, that is, that a map from the time-series data to the original system's state space exists. However, such assumption may not hold in general: an established link between embedding and observability theories shows that reconstructing the \textit{entire} original system is not always possible or suitable depending on the available time series~\cite{Letellier2005,Aguirre2005}.

The \textit{a priori} conditions under which the entire system state can be reconstructed from available measurements are determined by the observability property \cite{Kalman1959}. In the linear case, observable systems satisfy the necessary and sufficient conditions for the inference of the full-state of the system, for example, via embedding methods and state estimators like Luenberger observers~\cite{Luenberger1966} or Kalman filters~\cite{Kalman1960}. For nonlinear systems, a generalized notion of observability~\cite{Hermann1977} sets a sufficient condition for the existence of an invertible mapping (diffeomorphism) between the system's original state space and the differential embedding space constructed from a given measurement function \cite{Letellier2005,Aguirre2005}. 

\begin{figure*}[t]
    \centering
    \includegraphics[width=0.80\linewidth]{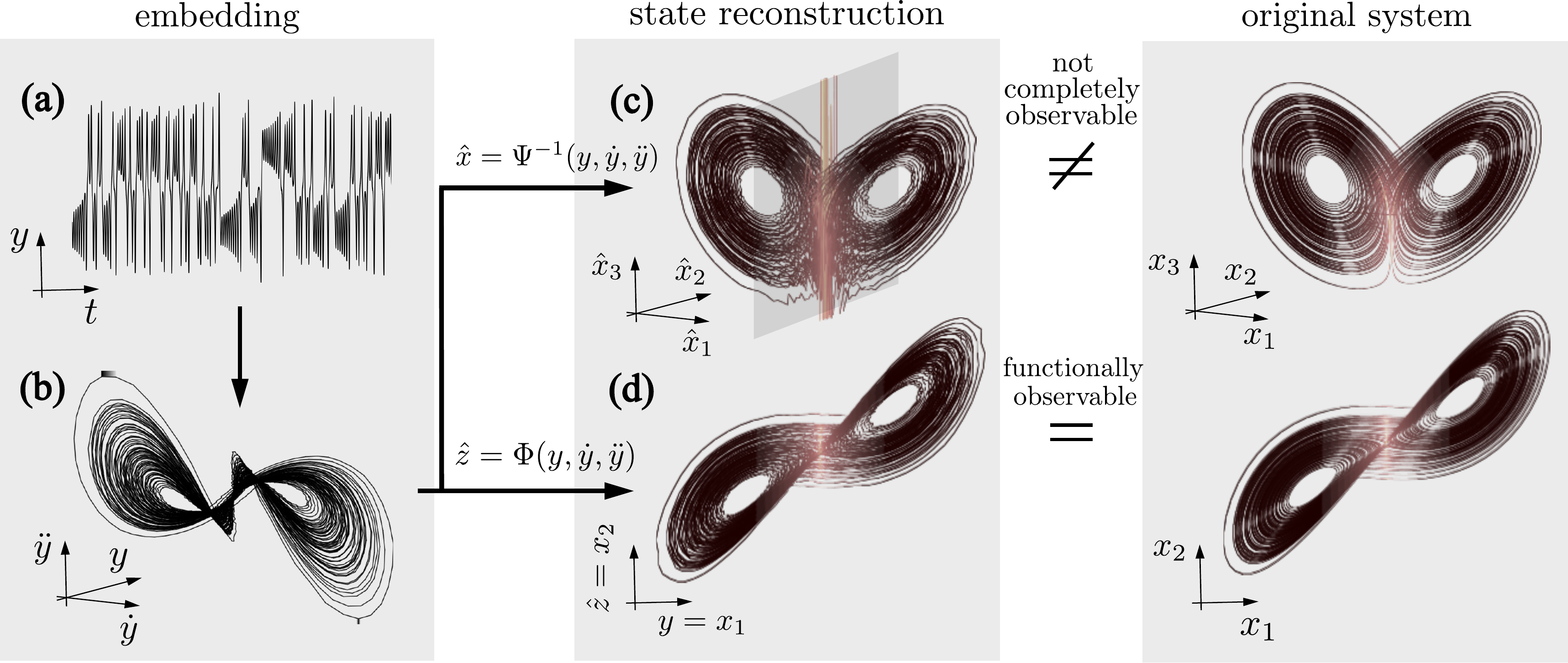}
    \caption{Functional observability versus complete observability for the reconstruction of the Lorenz attractor from time-series data.
    (a)~Time-series data $y = h(\bm x) = x_1$ measured from one of the system's state variables. 
    (b)~Differential embedding coordinates of the time-series data. 
    (c)~Reconstruction of the original coordinates of the Lorenz system from the differential embedding. The system is unobservable at the shaded plane ($x_1 = 0$) and, therefore, the reconstructed attractor (left side) shows large errors closer to this region compared to the ground-truth (unmeasured) attractor (right side). 
    (d)~Reconstruction of the low-dimensional subspace $(x_1,x_2)$ of the Lorenz attractor. The system is functionally observable everywhere with respect to the functional $z = g(\bm x) = x_2$ and, therefore, the reconstructed subspace (left side) is accurate compared to the ground-truth subspace (right side).
    Brighter colors correspond to states of the attractor closer to the system's unobservable plane.
    }
    \label{fig.summary}
\end{figure*}

Not only the relation between embedding and observability determines the possibility of reconstructing the original dynamical system from the embedding of time-series data, but it also characterizes \textit{how good} such reconstruction is. For example, the embedding space of the R\"ossler system was empirically shown to have singularity points depending on the measured variable \cite{Pecora1990}, which was later theoretically proven to be a consequence of unobservable regions in the original space \cite{Letellier2002}. 
Another case is the well-known Lorenz system: Fig.~\ref{fig.summary}a-c shows that unobservable regions hamper the quality of the attractor reconstruction from an embedded time series. Consequently, the applicability and performance of methods based on embedding \cite{Portes2016,Carroll2018,Portes2019} and state estimation \cite{Haber2017,Guan2018,Montanari2019} are highly dependent on the observability properties of the system.
Building from chaotic systems, often used as benchmark cases due to their short horizon of predictability, observability studies showed the potential to foster further discoveries in complex systems, including neuronal models \cite{Su2017,Aguirre2017}, metabolic reactions \cite{Liu2013c}, ecological systems \cite{Aparicio2021}, and networks \cite{Pasqualetti2013,Sun2013,Whalen2015,Letellier2018,Angulo2020,Montanari2020}.

For many systems and applications, though, complete observability is a condition that may be too restrictive. In practice, even if the original state space is not entirely observable (reconstructible), one may focus on particular subspaces (e.g., state variables) that are relevant to the considered applications. 
Examples include the estimation of the phase variable of nonlinear oscillators for synchronization analysis of chaotic systems \cite{Pecora1990,Rosenblum1997,Freitas2018}, modeling of climate dynamics \cite{Oh2014}, and forecasting of financial crashes \cite{Smug2018}; the positioning and tracking of a particular spatial coordinate (e.g., altitude) in autonomous aerial vehicles from indirect measurements \cite{Youn2020}; or the inference of control variables (which dictate how close a system is to a bifurcation) 
for the early warning of transitions from healthy to disease states in atrial fibrillation \cite{quail2015predicting} and epileptic seizures \cite{Jirsa2014}.

These practical problems motivate the concept of \textit{functional observability}~\cite{Fernando2010,Jennings2011}, which establishes conditions under which a desired functional of the system variables can be inferred from the available measurements (e.g., via the design of functional observers \cite{Darouach2000,Fernando2010,Trinh2012}). However, in spite of several applications designed for feedback control \cite{Alhelou2019,Trinh2012}, fault detection \cite{Emami2015}, and, more recently, large-scale networks \cite{Montanari2021}, the functional observability property is still restricted to linear dynamical systems.
Therefore, a theory to establish conditions for the reconstruction of a \textit{nonlinear} functional of a \textit{nonlinear} system, and that provides guidance on how to perform and leverage this reconstruction, is still missing.

In this paper, we provide a generalization of the functional observability property to nonlinear dynamical systems. This establishes a sufficient condition for the reconstruction of a nonlinear functional of state variables from a (possibly nonlinear) measurement function. If the system is functionally observable, we show how to determine the mapping between the differential embedding space and the original functional sought to be reconstructed, also proposing a coefficient of functional observability to locally characterize the quality of the functional reconstruction. Fig.~\ref{fig.summary} shows that, even though a system may not be completely observable (thus hampering the state-space reconstruction), it might still be functionally observable with respect to some low-dimensional subspace of interest, in which accurate reconstruction is still feasible. To illustrate the theoretical advantages of such framework in interpreting the effects of singularities and symmetries in the embedded state space, we present numerical simulations for chaotic benchmark systems with contrasting observability properties. Finally, we apply our theory for the analysis of a phenomenological model of seizure-like events, known as Epileptor~\cite{Jirsa2014}. We demonstrate that the presence of unobservable regions in the Epileptor's attractor can be used to provide early-warning signals of transitions from normal to seizure states in both simulated and empirical data.

The paper is organized as follows. Section~\ref{sec.back} provides a background on the complete observability of dynamical systems. Section~\ref{sec.method} presents the theoretical results for the generalization of functional observability to nonlinear systems. Section~\ref{sec.numresult} presents and discusses the numerical results in chaotic benchmark systems. Section \ref{sec.Epileptor} applies the proposed framework for the analysis of the Epileptor model and early warning of seizures. Finally, Section~\ref{sec.conc} concludes the work.

\section{Background on observability of nonlinear systems}
\label{sec.back}


This section provides background on observability theory for nonlinear systems \cite{Hermann1977,VidyasagarBook,Montanari2020}. 
Consider the following nonlinear dynamical system
\begin{equation}
\begin{cases}
\dot{\bm x} = \bm f(\bm x), \\
\bm y = \bm h(\bm x),
\end{cases}
\label{eq.nonlinearsys}
\end{equation}

\noindent
where $\bm x\in\mathcal X\subseteq\R^n$ is the state vector, $\bm y\in\R^q$ is the output vector (measurements), and $\bm f:\mathcal X\mapsto \mathcal V(\mathcal X)\subseteq \R^n$ and $\bm h:\mathcal X\mapsto\mathcal H(\mathcal X)\subseteq\R^q$ are smooth nonlinear functions. The explicit dependence of time in $\bm x(t)$ and $\bm y(t)$ is often omitted throughout this work. Let the flow map $\bm\Phi_T(\bm x(t_0)):\mathcal X\mapsto\mathcal X$ be the solution of \eqref{eq.nonlinearsys}:
\begin{equation}
\bm\Phi_T(\bm x(t_0)) := \bm x(t_0+T) = \bm x(t_0) + \int_{t_0}^{t_0+T} \bm f(\bm x(t)){\rm d}t .
\end{equation}

\noindent
The notion of local observability is formalized as follows.

\begin{defin} \cite{Hermann1977,VidyasagarBook}
	The nonlinear system \eqref{eq.nonlinearsys}, or the pair $\{\bm f,\bm h\}$, is \textit{locally observable at} $\bm x_0$ if there exists a neighborhood $\mathcal U\subseteq\mathcal X$ of $\bm x_0$ such that, for every state $\bm x_0\neq\bm x_1\in\mathcal U$, $\bm h\circ \bm\Phi_T(\bm x_0)\neq\bm h\circ\bm\Phi_T(\bm x_1)$ for some finite time interval $t\in [t_0, t_0+T]$. 
	Otherwise, $\{\bm f,\bm h\}$ is \textit{locally unobservable at} $\bm x_0$.
	The system is said to be \textit{locally observable} if it is locally observable at every $\bm x_0\in\mathcal X$.
\label{def.obsvnonlinear}
\end{defin}

Definition \ref{def.obsvnonlinear} states that a system is locally observable around an initial state $\bm x(t_0)$ if $\bm x(t_0)$ can be uniquely reconstructed from the measurements $\bm y$ over a finite trajectory, as defined by the composition map $\bm h\circ\bm\Phi_T(\bm x(t_0))$. The local observability of a nonlinear system can be verified through the following algebraic condition.

\begin{defin}\cite{VidyasagarBook}
    The observable space $\mathcal O(\bm x)$ of system \eqref{eq.nonlinearsys} is the linear space of functions over the field $\R$ spanned by all functions of the form
	\begin{equation}
	    \mathcal L^\nu_{\bm f} h_j(\bm x), \quad  0\leq\nu\leq s, \quad 1\leq j\leq q,
	\label{eq.obsv_space}
	\end{equation}
	
	\noindent
	where $\mathcal L^\nu_{\bm f} h_j(\bm x)$ denotes the $\nu$-th Lie derivative of the $j$-th component of $\bm h(\bm x)$ along the vector field $\bm f(\bm x)$, and $s$ is the smallest integer such that $\gradient\mathcal L^{k}_{\bm f} h_j(\bm x)$ belongs to the span formed by functions~\eqref{eq.obsv_space} for all $k> s$. By definition, $\mathcal L^0_{\bm f} h_j(\bm x)\coloneqq h_j(\bm x)$ and 
	\begin{equation}
	    \mathcal L^\nu_{\bm f} h_j(\bm x)\coloneqq\gradient \mathcal L^{\nu-1}_{\bm f} h_j(\bm x)\cdot \bm f(\bm x),
	    \label{eq.liederivatives}
	\end{equation}
	where $\gradient$ is the gradient operator with respect to $\bm x$.
\label{def.obs_space}
\end{defin}

\begin{theor}
	The system \eqref{eq.nonlinearsys}, or the pair $\{\bm f, \bm h\}$, is locally observable at $\bm x_0$ if there exists a neighborhood $\mathcal{U} \subseteq \mathcal{X}$ of $\bm x_0$ such that
    \begin{equation}
    	    \mathrm{dim} \left\{\gradient\mathcal{O}(\bm x) \right\} = n
    	    \label{eq.nonlinearobsvcondition}
    	\end{equation}
	holds for every $\bm x\in\mathcal U\subseteq\mathcal X$, where $\mathcal O(\bm x)$ is the observable space.
    \label{theor.nonlinearobsv}
\end{theor}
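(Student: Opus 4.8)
The plan is to translate the abstract distinguishability condition of Definition~\ref{def.obsvnonlinear} into the local injectivity of an \emph{observability map} assembled from the generators of the observable space, and then to obtain that injectivity from the rank hypothesis via the inverse function theorem. First I would establish the bridge between the measured signal and the Lie derivatives. Differentiating $y_j(t)=h_j(\bm x(t))$ along trajectories of \eqref{eq.nonlinearsys} and using $\dot{\bm x}=\bm f(\bm x)$, an induction on $\nu$ together with the recursion \eqref{eq.liederivatives} gives
\begin{equation}
\left.\frac{d^\nu y_j}{dt^\nu}\right|_{t=t_0} = \mathcal L^\nu_{\bm f} h_j(\bm x_0),
\end{equation}
so the jet of the output trajectory at $t_0$ is exactly the list of values of the functions \eqref{eq.obsv_space} spanning $\mathcal O(\bm x_0)$.

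Next I would collect the generators of $\mathcal O(\bm x)$ up to order $s$ into a single map $\bm\Theta:\mathcal X\mapsto\R^N$ whose components are the functions $\mathcal L^\nu_{\bm f} h_j(\bm x)$ for $0\leq\nu\leq s$ and $1\leq j\leq q$. By construction the rows of its Jacobian $\partial\bm\Theta/\partial\bm x$ are the gradients $\gradient\mathcal L^\nu_{\bm f} h_j(\bm x)$, so the row space of this Jacobian coincides with $\gradient\mathcal O(\bm x)$. The hypothesis $\mathrm{dim}\{\gradient\mathcal O(\bm x)\}=n$ then says precisely that $\bm\Theta$ has full column rank $n$ throughout $\mathcal U$.

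The central analytic step is to pass from this rank condition to local injectivity. At $\bm x_0$ the hypothesis supplies $n$ linearly independent gradients $\gradient\mathcal L^{\nu_i}_{\bm f} h_{j_i}(\bm x_0)$; stacking the corresponding components yields a reduced map $\tilde{\bm\Theta}:\R^n\mapsto\R^n$ with nonsingular Jacobian at $\bm x_0$. By the inverse function theorem, $\tilde{\bm\Theta}$ is a diffeomorphism onto its image on some neighborhood $\mathcal U'\subseteq\mathcal U$ of $\bm x_0$, hence injective there, and since its components are among those of $\bm\Theta$, the full map $\bm\Theta$ is injective on $\mathcal U'$. Consequently any two distinct states $\bm x_0\neq\bm x_1\in\mathcal U'$ disagree in at least one Lie derivative, $\mathcal L^\nu_{\bm f} h_j(\bm x_0)\neq\mathcal L^\nu_{\bm f} h_j(\bm x_1)$; by the identity above their output signals $y_j(t)$ then differ in the $\nu$-th time derivative at $t_0$, so $\bm h\circ\bm\Phi_T(\bm x_0)\neq\bm h\circ\bm\Phi_T(\bm x_1)$ on every finite interval $[t_0,t_0+T]$, which is exactly local observability at $\bm x_0$.

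I expect the technical care to concentrate in two places. First, the inverse function theorem only delivers injectivity of the $n$ selected components, and one must check that the gradients chosen independent at $\bm x_0$ stay independent on a whole neighborhood; this follows from continuity of the Jacobian and from $n$ being the maximal attainable rank for a map out of $\R^n$. Second, and more conceptually, one must justify truncating $\bm\Theta$ at order $s$ without losing distinguishing power, which is guaranteed by the defining property of $s$: every higher-order gradient $\gradient\mathcal L^k_{\bm f} h_j$ with $k>s$ already lies in the span of \eqref{eq.obsv_space}, so the rank of $\gradient\mathcal O(\bm x)$ is captured by the finite stack. The remaining passage — from a single nonvanishing output derivative at $t_0$ to genuine separation of the signals over $[t_0,t_0+T]$ — is then immediate from smoothness, since a smooth function that vanishes identically on an interval has all of its derivatives vanishing at the endpoint.
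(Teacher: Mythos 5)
Your proposal is correct and follows essentially the same route as the paper's proof, which is not spelled out in the text but deferred to the cited references (Hermann--Krener and Vidyasagar): identify output derivatives at $t_0$ with Lie derivatives, select $n$ generators of $\mathcal O(\bm x)$ with linearly independent gradients, apply the inverse function theorem to obtain local injectivity, and conclude distinguishability because smooth outputs that coincide on an interval must have identical jets at $t_0$. Your argument in fact proves slightly more than Definition~\ref{def.obsvnonlinear} requires (pairwise distinguishability of all points of $\mathcal U'$, not just distinguishability of $\bm x_0$ from its neighbors), which is harmless.
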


\begin{proof}
    See \cite{Hermann1977,VidyasagarBook,Montanari2020}.
\end{proof}

Note that the minimum order $\nu$ of Lie derivatives \eqref{eq.obsv_space} such that condition \eqref{eq.nonlinearobsvcondition} is satisfied depends on $\{\bm f, \bm h\}$. As a special case, if $\{\bm f, \bm h\}$ are rational functions, it suffices to check condition \eqref{eq.nonlinearobsvcondition} for the observable space $\mathcal O(\bm x)$ spanned by functions \eqref{eq.obsv_space} up to the $(n-1)$-th Lie derivative (i.e., $0\leq\nu\leq n -1$) \cite{AnguelovaThesis}.

Theorem \ref{theor.nonlinearobsv} is a generalization of the observability property of linear systems to nonlinear systems. If the pair $\{\bm f,\bm h\}$ is given by the linear functions $A\bm x$ and $C\bm x$, where $A\in\R^{n\times n}$ and $C\in\R^{q\times n}$, then it follows that the observable space $\mathcal O(\bm x)$ is defined by the row space of Kalman's observability matrix \cite{Kalman1959,Chi-TsongChen1999}
\begin{equation}
\gradient \mathcal O(\bm x) = [C^\transp \,\,\, (CA)^\transp \,\,\, (CA^2)^\transp \,\,\, \ldots \,\,\, (CA^{n-1})^\transp]^\transp
\label{eq.obsvmatrix}
\end{equation}
and therefore the linear system is observable if and only if $\operatorname{rank}(\gradient \mathcal O)=n$.

\subsection{Observability and embedding}
\label{sec.obsvembedding}

Differential embedding relates the original coordinates of a dynamical system to the derivatives of the measured time series. Formally, given some measurement $\bm y = \bm h(\bm x)$ over time $t\in[0, T]$, a differential embedding space $\mathcal E$ can be constructed via an appropriate choice of higher-order derivatives of $\bm y$ as coordinates:
\begin{equation}
    \mathcal E = {\rm span} \{\gradient y_j^{(\nu)} : j\in\{1,\ldots,q\} \,\, \text{and} \,\, \nu\in\{0,\ldots,s\}\},
\label{eq.embeddingspace}
\end{equation}

\noindent
where $y^{(\nu)}_j$ denotes the $\nu$-th time derivative of the $j$-th component of the measured signal $\bm y$.
If the map $\bm\Psi : \mathcal X\mapsto \mathcal E$ is a diffeomorphism, then the original state  space $\mathcal X$ and the embedding space $\mathcal E$ are related by a smooth and invertible change of coordinates. Therefore, in applications where the original state space is not available (due to unmeasured state variables), the underlying dynamical system can be assessed via an embedding of the measured time series. 
In practice, measured time-series data are available in discrete time and hence the embedding can be constructed by either computing their differential coordinates $y^{(\nu)}_j$ (e.g., via regularization methods to denoise the derivatives \cite{Rudin1992,Chartrand2011}) or using time-delay coordinates $y_j(t-k\tau)$, for some time delay $\tau$ and $k\in\N$, following Takens' theorem \cite{Takens1981,Stark1997}.


From Definition~\ref{def.obsvnonlinear}, $\{\bm f,\bm h\}$ is observable if the map
\begin{equation}
    \bm\Psi(\bm x) = 
    \begin{bmatrix}
    \bm y \\ \dot{\bm y} \\ \vdots \\ \bm y^{(\nu)}
    \end{bmatrix}
    =
    \begin{bmatrix}
    \bm h(\bm x) \\ \dv{\bm h(\bm x)}{t} \\ \vdots \\ \dv[\nu]{\bm h(\bm x)}{t}
    \end{bmatrix}
    =
    \begin{bmatrix}
    \bm{\mathcal L}_{\bm f}^0 \bm h(\bm x) \\ \bm{\mathcal L}_{\bm f}^1 \bm h(\bm x) \\ \vdots \\ \bm{\mathcal L}_{\bm f}^{\nu} \bm h(\bm x) 
    \end{bmatrix}
\label{eq.embeddingmap}
\end{equation}
is locally left invertible (injective) for some $0\leq\nu\leq s$, where $\bm{\mathcal L}_{\bm f}^\nu \bm h(\bm x) \coloneqq [\mathcal L_{\bm f}^\nu h_1(\bm x) \,\, \ldots \,\, \mathcal L_{\bm f}^\nu h_q(\bm x)]^\transp$. That is, if $\bm x$ is uniquely determinable from $\bm y$ and its successive derivatives. The map \eqref{eq.embeddingmap} is equivalent to the set of functions \eqref{eq.obsv_space} spanning the observable space $\mathcal O(\bm x)$ \cite{Letellier2005}, hence $\bm\Psi : \mathcal X \mapsto \mathcal E \equiv \mathcal O(\bm x)$.
This equivalence establishes a direct relation between the theories of observability and embedding \cite{Letellier2005,Aguirre2005}:
following the Inverse Function Theorem \cite[Section 7.1]{VidyasagarBook}, $\bm\Psi(\bm x)$ is locally invertible if its Jacobian matrix has full (column) rank (i.e., if condition \eqref{eq.nonlinearobsvcondition} holds).
Therefore, a state $\bm x_0$ is only reconstructable from the differential embedding of its measurements $\bm y$ over a finite time interval if $\bm\Psi$ is locally invertible or, in other words, the system is locally observable at $\bm x_0$.

Note that the choice of differential coordinates $y_j^{(\nu)}$ for the embedding space $\mathcal E$ is not unique and that an inappropriate selection of linearly dependent derivatives of $\bm y$ (i.e., functions of form \eqref{eq.obsv_space}) may lead to $\operatorname{dim}(\gradient\mathcal E)<\operatorname{dim}(\gradient\mathcal O(\bm x))$ \cite{Aguirre2005}. Accordingly, we assume throughout this work that the embedding space \eqref{eq.embeddingspace} is defined by a minimum selection of linearly independent functions $\eqref{eq.obsv_space}$ such that $\operatorname{dim}(\gradient\mathcal E)=\operatorname{dim}(\gradient\mathcal O(\bm x))$ around some neighborhood of $\bm x_0$. For example, we consider the embedding space $
\mathcal E = \{y,\dot y,\ddot y\}$ for the reconstruction of the Lorenz attractor (Fig.~\ref{fig.summary}a--c).

\section{Functional observability of nonlinear systems}
\label{sec.method}

Complete observability characterizes the sufficient condition for the (local) reconstruction of the \textit{full-state} vector $\bm x$ of a dynamical system \eqref{eq.nonlinearsys} from measurements $\bm y$ over finite time. 
%
Nonetheless, reconstructing the entire state vector $\bm x$ is often unfeasible or unnecessary in practice, and only a lower-dimensional function or subspace might be of interest, defined as
\begin{equation}
\bm z = \bm g(\bm x),
\label{eq.nonlinearfunc}
\end{equation}

\noindent
where $\bm z\in\R^{r}$ is the vector sought to be reconstructed, often with dimension $r \ll n$, and $\bm g:\mathcal{X} \mapsto\mathcal{G}(\mathcal{X})\subseteq\R^{r}$ is a nonlinear smooth functional.

In what follows, we provide a generalization of the observability property \cite{Hermann1977}, termed \textit{functional observability}.
Given a nonlinear dynamical system~\eqref{eq.nonlinearsys}, functional observability establishes the conditions under which the functional~\eqref{eq.nonlinearfunc} is reconstructible from the measured signal $\bm y$, without necessarily requiring the full state $\bm x$ to be reconstructible. Therefore, a system may be functionally observable with respect to some functional \eqref{eq.nonlinearfunc} even though it is (completely) unobservable.
Our results also generalize the functional observability property, originally established for linear systems \cite{Jennings2011}, to dynamical systems described by a nonlinear vector field $\bm f$, a nonlinear measurement function $\bm h$ and a nonlinear functional $\bm g$. Fig.~\ref{fig.generalizations} summarizes the relation between our theory and previous works. \\

\begin{figure}[t]
    \centering
    \includegraphics[width=\columnwidth]{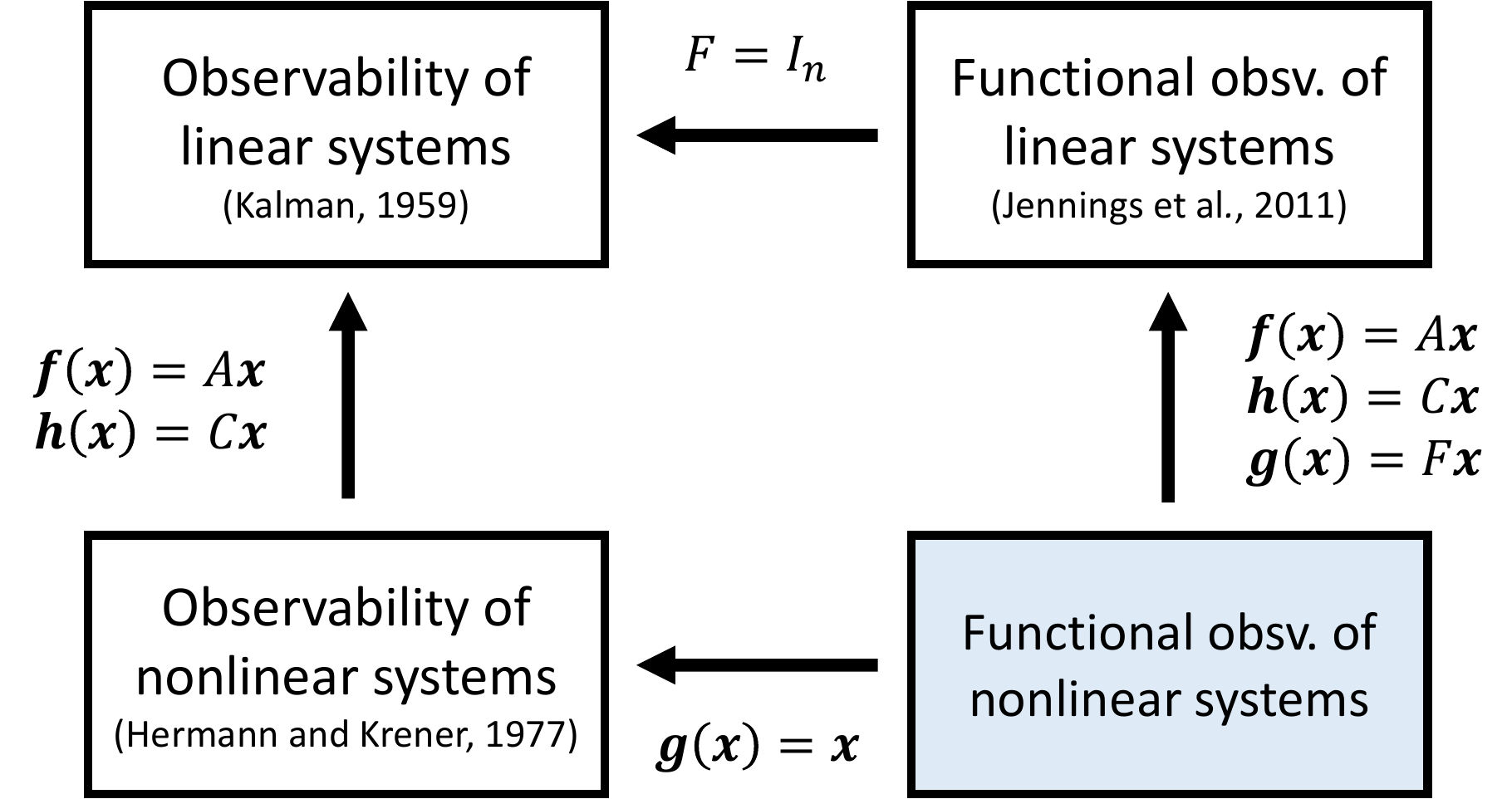}
    \caption{Functional observability of nonlinear systems and its special cases.}
    \label{fig.generalizations}
\end{figure}

Before presenting the main result, we first formally define functional observability as a generalization of complete observability (Definition~\ref{def.obsvnonlinear}):

\begin{defin}
	The nonlinear system \eqref{eq.nonlinearsys} and \eqref{eq.nonlinearfunc}, or the triple $\{\bm f,\bm h, \bm g\}$, is \textit{locally functionally observable at} $\bm x_0$ if there exists a neighborhood $\mathcal U\subseteq\mathcal X$ of $\bm x_0$ such that, for every state $\bm g(\bm x_0)\neq\bm g(\bm x_1)$, $\bm h\circ \bm\Phi_T(\bm x_0)\neq\bm h\circ\bm\Phi_T(\bm x_1)$ for some finite time interval $t\in [t_0,t_0+T]$. 
	Otherwise, $\{\bm f,\bm h,\bm g\}$ is \textit{locally functionally unobservable at} $\bm x_0$.
	The system is said to be \textit{locally functionally observable} if it is locally functionally observable at every $\bm x_0\in\mathcal X$.
\label{def.functobsvnonlinear}
\end{defin}

Analogous to Definition \ref{def.obsvnonlinear}, Definition \ref{def.functobsvnonlinear} states that a system is locally functionally observable around  an initial state $\bm x(t_0)$ if the functional $\bm g(\bm x(t_0))$ can be uniquely reconstructed from the measurements $\bm y$ over a finite trajectory. Analogous to Definition~\ref{def.obs_space}, we define the functional space related to \eqref{eq.nonlinearfunc} as follows:


\begin{defin}
The functional space $\mathcal{F}(\bm x)$ of system~\eqref{eq.nonlinearsys} and \eqref{eq.nonlinearfunc} is the linear space of functions over the field $\mathbb{R}$ spanned by all functions of the form
\begin{equation}
    \mathcal L^\nu_{\bm f} g_j(\bm x),\quad 0 \le \nu \leq \mu,\quad 1 \le j \le r ,
\end{equation} 
where $\mu$ is the smallest integer such that $\gradient \mathcal L^{k}_{\bm f} g_j(\bm x)$ belongs to the span formed by~\eqref{def.func_space} for all $k>\mu$.
\label{def.func_space}
\end{defin}
Based on Definitions~\ref{def.obs_space} and~\ref{def.func_space}, we now establish the condition for the functional observability analysis of nonlinear systems. Consider a locally unobservable system at $\bm x_0$, i.e.,
    \begin{equation}
        \mathrm{dim} \left\{ \gradient\mathcal{O}(\bm x_0) \right\} = k \le n, \,\,\, \forall \bm x_0 \in \mathcal U\subseteq\mathcal X.
    \label{eq.obsvdimension}
    \end{equation}
    Recall the theorem for the decomposition of unobservable systems \cite[Theorem 97]{VidyasagarBook}: 
    there exists a diffeomorphism $\bm T$ on $\mathcal U$ such that choosing an appropriate state transformation $\tilde{\bm{x}} = \bm T(\bm x)$ yields the partitioned vector
    \begin{equation}
        \tilde{\bm{x}} = \begin{bmatrix} \tilde{\bm{x}}_a \\ \tilde{\bm{x}}_b \end{bmatrix}, 
    \label{eq.statevectorpartitioned}
    \end{equation}
    where $\tilde{\bm x}_a\in\R^k$ and $\tilde{\bm x}_b\in\R^{n-k}$ correspond, respectively, to the observable and unobservable parts of the system in some neighborhood of $\bm T(\bm x_0)$. The transformed vector field is now given by
    \begin{equation} \label{eq:decompos}
        \tilde{\bm{f}}(\tilde{\bm x}) \equiv \tilde{\bm{f}}(\tilde{\bm{x}}_a, \tilde{\bm{x}}_b) = \begin{bmatrix} \tilde{\bm{f}}_a (\tilde{\bm{x}}_a) \\ \tilde{\bm{f}}_b (\tilde{\bm{x}}_a, \tilde{\bm{x}}_b) \end{bmatrix},
    \end{equation}
    where $\tilde{\bm f}_a : \R^k\mapsto\R^k$ and $\tilde{\bm f}_b : \R^n \mapsto\R^{n-k}$,
    and the transformed measurement function is given by
    \begin{equation}
        \tilde{\bm{h}}(\tilde{\bm x}) \equiv \tilde{\bm{h}}(\tilde{\bm{x}}_a) = \bm{h} \left( \bm T^{-1} (\tilde{\bm{x}}) \right),
    \label{eq.transformedmeasurement}
    \end{equation}
    which will only depend on $\tilde{\bm{x}}_a$.
    
    After this change of coordinates, the system \eqref{eq.nonlinearsys} and \eqref{eq.nonlinearfunc} is given by
    \begin{equation} \label{eq.transformedsys}
    \begin{cases} 
        \begin{bmatrix} \dot{\tilde{\bm{x}}}_a \\ \dot{\tilde{\bm{x}}}_b \end{bmatrix} = \begin{bmatrix} \tilde{\bm{f}}_a (\tilde{\bm{x}}_a) \\ \tilde{\bm{f}}_b (\tilde{\bm{x}}_a, \tilde{\bm{x}}_b) \end{bmatrix},  \\
        \bm{y} = \tilde{\bm{h}}(\tilde{\bm{x}}_a), \\
        \bm{z} = \tilde{\bm{g}}(\tilde{\bm{x}}_a, \tilde{\bm{x}}_b).
    \end{cases}
    \end{equation}
    
    \noindent
    We now note that if the following condition 
    \begin{equation}
    	    \mathrm{dim} \left\{\gradient\mathcal{O}(\bm x_0) \right\} = \mathrm{dim} \left\{\gradient\mathcal{O}(\bm x_0), \gradient\mathcal{F}(\bm x_0) \right\}
    	    \label{eq.functobsvcondition2}
    \end{equation}
	
	\noindent
    holds locally for every $\bm x_0 \in \mathcal U \subseteq \mathcal{X}$, then, for some $\tilde{\bm{x}}= \bm{T}(\bm{x})$, we have that $\tilde{\bm{g}}(\tilde{\bm{x}})\equiv \tilde{\bm{g}}(\tilde{\bm{x}}_a, \tilde{\bm{x}}_b)\equiv \tilde{\bm{g}}(\tilde{\bm{x}}_a)$ for all $\bm{x}\in\mathcal U$. Consequently, the system is functionally observable given that
    \begin{equation}
        \tilde{\bm{g}}(\tilde{\bm x}) \equiv \tilde{\bm{g}}(\tilde{\bm x}_a) = \bm{g} \left( \bm T^{-1} (\tilde{\bm{x}}) \right)
    \label{eq.transformedfunctional}
    \end{equation}
    depends only on $\tilde{\bm{x}}_a$, which is the state vector corresponding to the observable subsystem.

Condition \eqref{eq.functobsvcondition2} provides a sufficient condition for the local functional observability of the nonlinear system \eqref{eq.nonlinearsys} and \eqref{eq.nonlinearfunc}, or the triple $\{\bm f, \bm h, \bm g\}$, at $\bm x_0 \in \mathcal U$. Note that this condition is locally equivalent to $\mathcal F(\bm x_0)\subseteq \mathcal O(\bm x_0)$, that is, that the functional space (the subspace to be reconstructed) should be contained inside the observable space (the subspace reconstructible from the measurement function).
The following theorem provides a condition for the functional observability of a nonlinear system that is equivalent to condition \eqref{eq.functobsvcondition2}, but easier to implement since it does not require calculation of the functional space $\mathcal F(\bm x_0)$ and the involved higher-order Lie derivatives of $\bm g$. 

\begin{theor}
	The nonlinear system \eqref{eq.nonlinearsys} and \eqref{eq.nonlinearfunc}, or the triple $\{\bm f, \bm h, \bm g\}$, is locally functionally observable at $\bm x_0$ if there exists a neighborhood $\mathcal U \subseteq \mathcal{X}$ of $\bm x_0$ such that
    	\begin{equation}
    	    \mathrm{dim} \left\{\gradient\mathcal{O}(\bm x_0) \right\} = \mathrm{dim} \left\{\gradient\mathcal{O}(\bm x_0), \gradient\bm g(\bm x_0) \right\}
    	    \label{eq.functobsvcondition}
    	\end{equation}
	    holds for every $\bm x_0 \in \mathcal U \subseteq \mathcal{X}$, where $\mathcal O(\bm x)$ is the observable space.
	\label{theor.nonlinearfuncobsv}
\end{theor}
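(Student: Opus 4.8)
The plan is to prove the theorem by reducing the hypothesis \eqref{eq.functobsvcondition} to the already-established sufficient condition \eqref{eq.functobsvcondition2}. The discussion preceding the theorem shows that \eqref{eq.functobsvcondition2} --- equivalently, the local inclusion $\mathcal F(\bm x_0)\subseteq\mathcal O(\bm x_0)$ --- guarantees that, in the decomposed coordinates \eqref{eq.transformedsys}, the transformed functional $\tilde{\bm g}$ depends only on the observable component $\tilde{\bm x}_a$, which in turn yields local functional observability. It therefore suffices to show that the weaker-looking hypothesis \eqref{eq.functobsvcondition}, which constrains only the zeroth-order gradient $\gradient\bm g(\bm x_0)$, already implies the full inclusion $\gradient\mathcal F(\bm x_0)\subseteq\operatorname{span}\{\gradient\mathcal O(\bm x_0)\}$ required by \eqref{eq.functobsvcondition2}.

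First I would reformulate both conditions as statements about the observability codistribution $\gradient\mathcal O$. Under the constant-rank hypothesis \eqref{eq.obsvdimension}, the decomposition of \cite[Theorem 97]{VidyasagarBook} provides coordinates $\tilde{\bm x}=(\tilde{\bm x}_a,\tilde{\bm x}_b)$ in which $\operatorname{span}\{\gradient\mathcal O\}$ is spanned by the differentials of the observable coordinates $\tilde{\bm x}_a$, and in which both $\tilde{\bm f}_a$ and $\tilde{\bm h}$ depend on $\tilde{\bm x}_a$ alone, as in \eqref{eq:decompos}--\eqref{eq.transformedmeasurement}. In these coordinates a smooth function $\phi$ satisfies $\gradient\phi\in\operatorname{span}\{\gradient\mathcal O\}$ on $\mathcal U$ precisely when $\partial\phi/\partial\tilde{\bm x}_b\equiv 0$, i.e.\ when $\phi$ is locally a function of $\tilde{\bm x}_a$ only. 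Condition \eqref{eq.functobsvcondition} is thus the statement that, after the transformation, each component $g_j$ is a function of $\tilde{\bm x}_a$ alone.

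The key step is then an induction on the Lie-derivative order $\nu$ establishing that $\gradient\mathcal L^\nu_{\bm f}g_j(\bm x_0)\in\operatorname{span}\{\gradient\mathcal O(\bm x_0)\}$ for every $\nu\ge 0$ and every $j$; taking the span over $\nu$ and $j$ gives $\gradient\mathcal F(\bm x_0)\subseteq\operatorname{span}\{\gradient\mathcal O(\bm x_0)\}$, which is \eqref{eq.functobsvcondition2}. The base case $\nu=0$ is exactly \eqref{eq.functobsvcondition}. For the inductive step I would invoke the invariance of the observable codistribution under Lie differentiation along $\bm f$: if $\phi$ is a function of $\tilde{\bm x}_a$ alone, then $\mathcal L_{\bm f}\phi=\sum_i(\partial\phi/\partial\tilde x_{a,i})\,\tilde f_{a,i}(\tilde{\bm x}_a)$ is again a function of $\tilde{\bm x}_a$ alone, because $\tilde{\bm f}_a$ does not depend on $\tilde{\bm x}_b$. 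Hence $\gradient\mathcal L_{\bm f}\phi\in\operatorname{span}\{\gradient\mathcal O\}$, and applying this with $\phi=\mathcal L^{\nu}_{\bm f}g_j$ advances the induction.

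The main obstacle I anticipate is the rigorous passage from the pointwise algebraic condition ``$\gradient g_j$ lies in the span of $\gradient\mathcal O$ at each point'' to the integrated geometric statement ``$g_j$ is a function of $\tilde{\bm x}_a$ alone.'' This requires the codistribution $\operatorname{span}\{\gradient\mathcal O\}$ to have constant dimension on the neighborhood $\mathcal U$ --- guaranteed by \eqref{eq.obsvdimension} --- so that the decomposition \eqref{eq.statevectorpartitioned}--\eqref{eq.transformedsys} and the associated straightening of coordinates are valid; without constant rank the Lie-derivative invariance argument breaks down at singular points. Once this regularity is in place, the inductive invariance argument is routine, and combining $\gradient\mathcal F(\bm x_0)\subseteq\operatorname{span}\{\gradient\mathcal O(\bm x_0)\}$ with the previously established consequence of \eqref{eq.functobsvcondition2} completes the proof.
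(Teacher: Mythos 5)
Your proof is correct, and it reaches the paper's conclusion by a genuinely different route. Both arguments share the same skeleton: reduce the hypothesis \eqref{eq.functobsvcondition} to the previously established sufficient condition \eqref{eq.functobsvcondition2} by showing, inductively in the Lie-derivative order, that every $\gradient \mathcal L^{\nu}_{\bm f} g_j$ stays in $\operatorname{span}\{\gradient\mathcal O\}$. The paper implements the induction purely algebraically and never invokes the decomposition: it writes $\gradient\bm g = \sum_{i} L_i\,\gradient\mathcal L^{i}_{\bm f}\bm h$, contracts with $\bm f$ to obtain $\mathcal L^{1}_{\bm f}\bm g = \sum_i L_i\,\mathcal L^{i+1}_{\bm f}\bm h$, re-differentiates, and absorbs the resulting higher-order terms $\gradient\mathcal L^{s+j}_{\bm f}\bm h$ back into the span via the closure property of Definition~\ref{def.obs_space}; this algebra also gives the trivial converse, so the paper obtains a two-way equivalence between \eqref{eq.functobsvcondition} and \eqref{eq.functobsvcondition2}. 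You instead pass to the straightened coordinates of \cite[Theorem 97]{VidyasagarBook}, characterize $\gradient\phi\in\operatorname{span}\{\gradient\mathcal O\}$ on $\mathcal U$ as $\partial\phi/\partial\tilde{\bm x}_b\equiv 0$, and propagate this through Lie derivatives using the triangular structure $\tilde{\bm f}_a=\tilde{\bm f}_a(\tilde{\bm x}_a)$ of \eqref{eq:decompos}. What each buys: the paper's route needs no knowledge of $\bm T$ (consistent with its advertised advantage that the test requires no system transformation) and proves the full equivalence; your route is geometrically cleaner and handles state-dependent combination coefficients for free, whereas the paper's re-differentiation step implicitly treats the matrices $L_i$ as constant (for $L_i=L_i(\bm x)$ the product rule would generate extra $\gradient L_i$ terms that the written proof does not address). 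The price is your reliance on the constant-rank hypothesis \eqref{eq.obsvdimension} and on the existence of the local decomposition, which you correctly flag. One further observation: once your base case is established (each transformed component $\tilde g_j$ depends on $\tilde{\bm x}_a$ alone), local functional observability already follows from the discussion preceding the theorem, so your induction over $\nu$ is needed only to recover the stronger statement \eqref{eq.functobsvcondition2} itself rather than the theorem's conclusion.
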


\begin{proof}
See Appendix \ref{app.appendproof}.
\end{proof}


The functional observability condition \eqref{eq.functobsvcondition} establishes an easy-to-implement test that can be directly verified on the system's equations, that is, functions $\{\bm f,\bm h,\bm g\}$. We note that,  even though its derivation is based on the system decomposition \eqref{eq.transformedsys} presented in \cite[Theorem 97]{VidyasagarBook}, this condition does not require any system transformation or prior knowledge of its equivalence transformation map $\bm T(\bm x)$. This provides a significant advantage to analyze the limitations in subspace reconstruction of dynamical systems where deriving the transformation map $\bm T(\bm x)$ and its inverse can be computationally intensive, such as the Epileptor model studied in Section~\ref{sec.Epileptor}.

If the triple $\{\bm f,\bm h,\bm g \}$ is given by linear functions $A\bm x$, $C\bm x$ and $F\bm x$, where $A\in\R^{n\times n}$, $C\in\R^{q\times n}$ and $F\in\R^{r\times n}$, then the observable space $\mathcal O(\bm x)$ is defined by the row space of the observability matrix \eqref{eq.obsvmatrix} and the functional observability of a linear system can be verified using the following rank condition derived in \cite{Jennings2011,Rotella2016a}:
\begin{equation}
\rank(\gradient\mathcal O)
=
\rank\left(
\begin{bmatrix}
\gradient\mathcal O \\ F
\end{bmatrix}\right)
.
\label{eq.linearfunctobsv}
\end{equation}

If the full-state vector is sought to be reconstructed (i.e., $\bm g(\bm x) = \bm x$), then $\operatorname{dim}\{\gradient \bm g(\bm x)\} = n$ and, therefore, the functional observability condition \eqref{eq.functobsvcondition} reduces to the complete observability condition \eqref{eq.nonlinearobsvcondition}. Likewise, in the linear case,  complete observability is a special case of functional observability by considering $F=I_n$, where $I_n$ is the identity matrix of size $n$, which reduces condition \eqref{eq.linearfunctobsv} to the classical condition $\rank(\gradient\mathcal O)=n$. 
\\

%
As an illustrative example, consider a dynamical system \eqref{eq.nonlinearsys} and \eqref{eq.nonlinearfunc}, or equivalently the triple $\{\bm f, h, g\}$, defined by
\begin{equation}
    \bm f (\bm x) = \begin{bmatrix} 2\,x_{1}\\ x_{2}+\frac{x_{3}}{\sqrt{x_{1}}}\\ 2x_{3} \end{bmatrix}, \,\,\,
    h(\bm x) = x_2, \,\,\,
    g (\bm x) = x_{2}+\frac{x_{3}}{\sqrt{x_{1}}},
\label{eq.ex_001}
\end{equation}
where $\bm x = [x_1 \,\, x_2 \,\, x_3]^\transp\in\mathcal X$ and $\mathcal{X} = \left\{\bm x\in\mathbb{R}^3 \,\, | \,\, x_1 \neq 0 \right\}$ is the region of analysis. Following Theorem~\ref{theor.nonlinearfuncobsv}, we first need to determine a basis for the observable and functional spaces according to Definitions~\ref{def.obs_space} and \ref{def.func_space}:
\begin{align}
\gradient\mathcal{O}(\bm x) 
&=
\pdv{}{\bm x} \begin{bmatrix} h({\bm x}) \\ \mathcal L_{\bm f} h({\bm x}) \\ \mathcal L^2_{\bm f} h({\bm x}) \\ \mathcal L^3_{\bm f} h({\bm x}) \\ \mathcal L^4_{\bm f} h({\bm x}) \\ \vdots \end{bmatrix} 
= 
\begin{bmatrix}
0 & 1 & 0 \\
-\frac{1}{2}\frac{x_3}{x_1^{3/2}} & 1 & \frac{1}{\sqrt{x_1}} \\
-\frac{x_3}{x_1^{3/2}} & 1 & \frac{2}{\sqrt{x_1}} \\
-\frac{3}{2}\frac{x_3}{x_1^{3/2}} & 1 & \frac{3}{\sqrt{x_1}} \\
-{2}\frac{x_3}{x_1^{3/2}} & 1 & \frac{4}{\sqrt{x_1}} \\
\vdots & \vdots & \vdots
\end{bmatrix}
\\
\gradient {g}(\bm x) &=  
\begin{bmatrix} 
-\frac{1}{2}\frac{x_{3}}{{x_{1}}^{3/2}} & 1 & \frac{1}{\sqrt{x_{1}}} 
\end{bmatrix}.
\end{align}
Rows $\{3, 4, 5, ...\}$ of $\gradient\mathcal O(\bm x)$ are linear combinations of the first and second rows, leading to $\operatorname{dim}\{\gradient\mathcal O\} = 2, \, \forall \bm x \in \mathcal{X}$. This shows that, according to Theorem~\ref{theor.nonlinearobsv}, the pair $\{\bm f, h\}$ is not (completely) observable, i.e., it is not possible to reconstruct the entire state vector $\bm x$ from $y=x_2$. However, the triple $\{\bm f, h, g\}$ may still be functionally observable. By inspection, it is easy to see that $\gradient g(\bm x)$ is a linear combination of the rows of $\gradient\mathcal{\mathcal O}(\bm x)$, therefore satisfying condition \eqref{eq.functobsvcondition} of Theorem~\ref{theor.nonlinearfuncobsv} for all $\bm x\in\mathcal X$. This example illustrates that, even though a system is locally unobservable, it can still be locally functionally observable. 

\subsection{Functional observability and embedding}
\label{sec.functobsvembedding}

Complete observability establishes a sufficient condition for the existence of the (local) left-inverse map $\bm\Psi^{-1}:\mathcal E\mapsto\mathcal X$ from an embedding space to the original state space. Here, we generalize this relation by showing that functional observability establishes a sufficient condition for the existence of a map $\bm\Phi : \mathcal E\mapsto\mathcal G(\mathcal X)$ from the embedding space to the subspace sought to be reconstructed. Furthermore, we demonstrate how to construct such a map if the system is functionally observable.

Let $\{\bm f,\bm h, \bm g\}$ be a functionally observable system with an observable space $\mathcal O(\bm x)$ of local dimension \eqref{eq.obsvdimension}.
Following \cite[Theorem 97]{VidyasagarBook}, there exists a diffeomorphism $\bm T$ on $\mathcal U$ such that the transformation $\tilde{\bm x} = \bm T(\bm x)$ partitions the state vector as in \eqref{eq.statevectorpartitioned}. Consequently, the triple $\{\bm f,\bm h,\bm g\}$ can now be represented by $\{\tilde{\bm f}, \tilde{\bm h}, \tilde{\bm g}\}$ as in \eqref{eq.transformedsys}.
The diffeomorphism $\bm T$ is not unique and can be designed by partitioning it as
\begin{equation}
    \bm T(\bm x) = \begin{bmatrix}
    \tilde{\bm x}_a \\ \tilde{\bm x}_b
    \end{bmatrix} = \begin{bmatrix}
    \bm\Psi_a(\bm x) \\ \bm\Psi_b(\bm x)
    \end{bmatrix},
\label{eq.partitioneddiff}
\end{equation}

\noindent
for some neighborhood $\mathcal U\subseteq\mathcal X$ of $\bm x_0$. The functions $\bm\Psi_a(\bm x)$ and $\bm\Psi_b(\bm x)$ can be designed as follows:
\begin{enumerate}
    \item Construct a map $\bm\Psi_a(\bm x)$ by selecting a minimum set of linearly independent functions of form \eqref{eq.obsv_space} such that $\operatorname{dim}\{\gradient\bm\Psi_a(\bm x)\}=\operatorname{dim}\{\gradient\bm\Psi_a(\bm x), \gradient\mathcal O(\bm x)\}=k$, $\forall \bm x\in\mathcal U\subseteq \mathcal X$.
    
    \item Construct an arbitrary map $\bm\Psi_b(\bm x)$
    such that $\operatorname{dim}\{\gradient \bm T(\bm x)\}=n$, $\forall \bm x\in\mathcal U\subseteq\mathcal X$, i.e., $\gradient\bm T(\bm x)$ has full rank.
\end{enumerate}

Remind that $\bm\Psi_a (\bm x)$ and $\bm\Psi_b (\bm x)$ are only valid around a local neighborhood $\mathcal U\subseteq\mathcal X$ of some state $\bm x_0\in\mathcal X$. 
The transformation $\bm\Psi_a (\bm x): \mathcal U \mapsto \mathcal E$ defines a basis for the embedding space $\mathcal E$, which depends only on $\bm y$ and its successive derivatives (see Eq.~\eqref{eq.embeddingspace}). 
Since $\bm y=\bm h(\bm x) = \tilde{\bm h}(\tilde{\bm x}_a)$ is a function only of $\tilde{\bm x}_a$ (following Eq.~\eqref{eq.transformedmeasurement}), then step~1 guarantees that $\tilde{\bm x}_a\in\mathcal E$.
Step 2 constructs an arbitrary function $\bm\Psi_b (\bm x) : \mathcal U \mapsto \mathcal E^{\rm c}$ that defines a basis to the complement of the embedding space $\mathcal E^{\rm c}$ in order to accomplish the Inverse Function Theorem.
Therefore, the designed diffeomorphism $\bm T : \mathcal U \mapsto \mathcal E \cup \mathcal E^{\rm c}$ has a local inverse map $\bm T^{-1}$. 

Since the system is functionally observable, then relation \eqref{eq.transformedfunctional} holds and $\bm z = \bm g(\bm x)=\tilde{\bm g}(\tilde{\bm x}_a)$, which depends only on $\tilde{\bm x}_a\in\mathcal E$. Therefore, there exists a map $\tilde{\bm g} :\mathcal E\mapsto\tilde{\mathcal G}(\mathcal E)$ from the embedding space $\mathcal E$ to the functional sought to be reconstructed $\bm z = \tilde{\bm g}(\tilde{\bm x}_a)$. Equivalently, $\bm z = \bm g(\bm x)$ can be reconstructed from the composition map $\bm \Phi = \bm g \circ \bm T^{-1}$, which can be locally constructed from the known function $\bm g$ and the designed transformation $\bm T$ according steps 1 and 2. Finally, if the system is functionally observable, then
\begin{align}
    \big[\bm\Phi : \mathcal E \cup \mathcal E^{\rm c} \xmapsto{\bm T^{-1}} \mathcal U \xmapsto{\bm g} {\mathcal G}(\mathcal U) \big]
    \equiv
    \big[\tilde{\bm g} : \mathcal E \mapsto \tilde{\mathcal G}(\mathcal E) \big]
\label{eq.compositionmapPhi}
\end{align}

\noindent
Fig.~\ref{fig.compositionmap} illustrates a commutative diagram of the composition map \eqref{eq.compositionmapPhi}.
Clearly, if $k=n$, then the relation between complete observability and embedding follows as a special case:
\begin{equation}
    \bm T(\bm x)=\bm\Psi_a(\bm x) \,\,
    \text{and} \,\,
    \bm\Phi : \mathcal E \xmapsto{\bm\Psi_a^{-1}} \mathcal U \xmapsto{\bm g} \mathcal G(\mathcal U).
\end{equation}

\begin{figure}
    \centering
    \includegraphics[width=0.85\columnwidth]{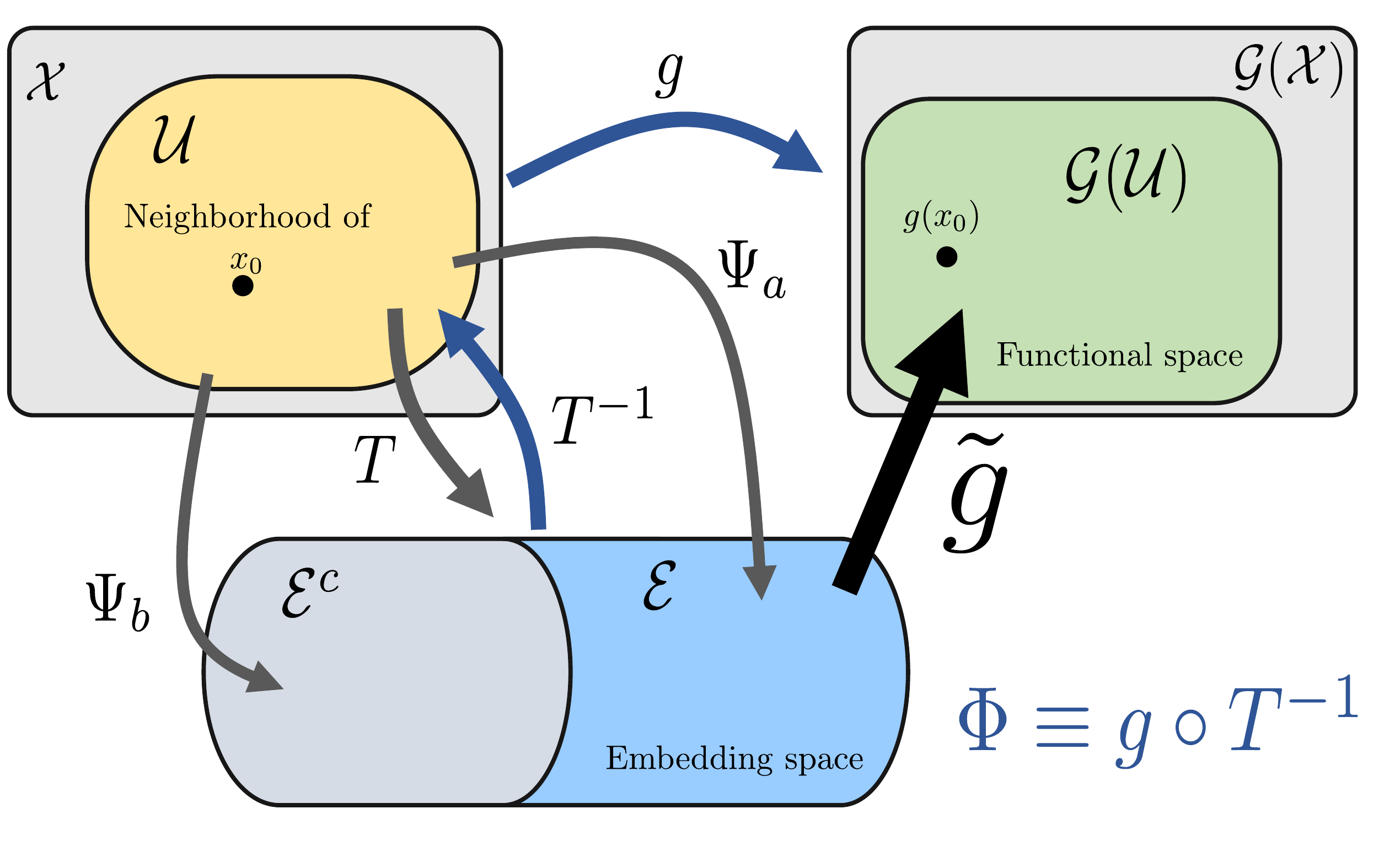}
    \caption{Commutative diagram of the composition map $\bm\Phi$ (blue path) which maps an embedding space $\mathcal E$ to the subspace sought to be reconstructed $\mathcal G(\mathcal U)$.}
    \label{fig.compositionmap}
\end{figure}


%

Building on the previous example, given that the system \eqref{eq.ex_001} is functionally observable, we demonstrate how to compute (reconstruct) the sought vector $\bm z$ from the measurement signal $\bm y$. To this end, we first decompose the system as in \eqref{eq.statevectorpartitioned} by constructing a diffeomorphism $\bm T(\bm x)$ partitioned as \eqref{eq.partitioneddiff}:
\begin{equation}
    \tilde{\bm x} 
    \coloneqq 
    \begin{bmatrix} \tilde{\bm x}_a \\ \xdash[1.5em] \\ \tilde{\bm x}_b \end{bmatrix} 
    = 
    \bm T(\bm x) 
    =
    \begin{bmatrix} h(\bm x) \\ \mathcal L_{\bm f} h({\bm x}) \\ \xdash[3em] \\ \bm\Psi_b({\bm x}) \end{bmatrix}
    \equiv
    \begin{bmatrix} y \\ \dot{y} \\ \xdash[2em] \\ \bm\Psi_b({\bm x}) \end{bmatrix},
\end{equation}
where $\tilde{\bm x}_a \in \mathbb{R}^2$, $\tilde{\bm x}_b \in \mathbb{R}^1$, and $\operatorname{dim}\{\gradient\mathcal O\} = k = 2$. Note that $\bm\Psi_a(\bm x) = [y \,\,\, \dot{y}]^\transp$ defines a map between the observable vector and the differential embedding coordinates, while $\bm\Psi_b({\bm x})$ is chosen arbitrarily to accomplish the Inverse Function Theorem. Therefore, the diffeomorphism
\begin{equation}
    \tilde{\bm x} = \bm T(\bm x) = \begin{bmatrix} x_2 \\ x_{2}+\frac{x_{3}}{\sqrt{x_{1}}} \\ \xdash[4em] \\ x_1 \end{bmatrix}
\end{equation}
has the inverse function
\begin{equation}
    \bm x = \bm T^{-1}(\tilde{\bm x}) = \begin{bmatrix} \tilde{x}_3 \\ \tilde{x}_1 \\ \xdash[6em] \\ (\tilde{x}_2-\tilde{x}_1)\sqrt{\tilde{x}_3} \end{bmatrix}, \quad \forall \bm x\in\mathcal X.
\end{equation}
Consequently, $\bm z$ can be computed as
\begin{equation}
    \tilde{g} (\tilde{\bm x}) = g \left( \bm T^{-1}(\tilde{\bm x}) \right) = \tilde{x}_2 = x_2 + \frac{x_3}{\sqrt{x_1}} = \dot{x}_2 = \dot{y}.
\end{equation}
As expected, we have that $\tilde{{g}}({\tilde{\bm x}_a}, {\tilde{\bm x}_b})\equiv \tilde{{g}}({\tilde{\bm x}_a})$ and, therefore, $g \left( \bm T^{-1}(\tilde{\bm x}) \right)$ depends only on $\tilde{\bm x}_a$, which is a function of $y$ and its subsequent derivatives.

\section{Observability of chaotic systems}
\label{sec.numresult} 

We explore the functional observability property in different types of chaotic dynamical systems with contrasting observability properties, considering different measurement functions as well as functionals sought to be reconstructed.
Following the theoretical conditions established in Section~
\ref{sec.method}, functional (or full-state) reconstruction is possible when the system is functionally (or completely) observable. Beyond this binary characterization of the system observability (i.e., either the system is or is not observable),
we show that the performance of the reconstructed functional (full-state) vector is dependent on the proximity of the system state to functionally (completely) unobservable regions in the state space. 
The reconstruction errors are related to the sensitivity of the maps $\bm\Psi^{-1}$ and $\bm\Phi$ to small perturbations (e.g., noise in the measured signals $\bm y(t)$), and can be quantified by the absolute condition number of the inverse maps between the embedding coordinates and the reconstructed state:
\begin{equation}
    \kappa(\bm\Psi^{-1}) = \norm{(\gradient\bm\Psi)^{-1}}
    \text{,} \,\,\,
    \kappa(\bm\Phi) = \norm{\gradient\bm g\cdot(\gradient\bm T)^{-1}}.
    \label{eq.coeffuncobsv}
\end{equation}

\noindent
We address the condition numbers $\kappa(\bm\Psi^{-1})$ and $\kappa(\bm\Phi)$ as the coefficients of complete and functional observability, respectively; a nomenclature that was previously adopted for $\kappa(\bm\Psi^{-1})$ in studies restricted to complete observability \cite{Letellier2002,Aguirre2005,Whalen2015,Montanari2020}. Results show that these coefficients can be employed to assess the quality of the (functional) state reconstruction as $\bm x(t)$ approaches unobservable regions: the larger $\kappa$, the higher the reconstruction error in the corresponding states. 



In what follows, chaotic systems were numerically integrated using a fourth-order Runge-Kutta integrator with time step ${\rm d}t = 0.01$s for a total simulation time $T = 1100$s, where the initial transient $T_{\rm trans} = 1000$s was discarded and initial conditions were randomly drawn from a normal distribution (i.e., $x_i(0)\sim\mathcal N(0,1)$, $i = 1,\ldots, n$). Codes are publicly available at \url{https://github.com/montanariarthur/NonlinearObservability}.
The symbolic construction of Lie derivatives \eqref{eq.obsv_space} spanning the observable space $\mathcal O(\bm x)$, as well as maps $\bm\Psi$ and $\bm\Phi$, is illustrated in these codes. Note that $\bm\Psi$ is composed by the minimum set of linearly independent functions \eqref{eq.obsv_space} spanning $\mathcal O(\bm x)$. Therefore, as a special case for $q = 1$ and $\nu = n-1$, $\bm\Psi(\bm x) = \mathcal O(\bm x)$.

\subsection{Lorenz system: observability and symmetry}
\label{sec.numresult.lorenz}

The well-known Lorenz'63 system is given by
\begin{equation}
\begin{cases}
\dot x_1 = \sigma(x_2-x_1), \\
\dot x_2 = Rx_1 - x_2 - x_1x_3, \\
\dot x_3 = x_1x_2 - bx_3,
\end{cases}
\label{eq.lorenz}
\end{equation}

\noindent
where $(R,\sigma,b) = (28,10,8/3)$  is a set of parameters that leads to a chaotic attractor (Fig.~\ref{fig.summary}c, right). Here, we consider that only the state variable $x_1$ is available for measurement (i.e., $y = h(\bm x) = x_1$). The observability of the pair $\{\bm f,h\}$ can thus be verified through the observability matrix
\begin{equation}
    \gradient\mathcal O(\bm x) =
    \begin{bmatrix}
        1 & 0 & 0 \\
        -\sigma & \sigma & 0 \\
        \sigma^2 + \sigma(\rho - x_3) & -\sigma(\sigma + 1) & -\sigma x_1
    \end{bmatrix} .
\end{equation}

\noindent
Since $\det(\gradient\mathcal O(\bm x)) = -\sigma^2 x_1$, then, following condition \eqref{eq.nonlinearobsvcondition}, the system is locally observable at every state $\bm x\in\R^3$ except at $\bm x_0 = [0 \,\,\, x_2 \,\,\, x_3]^\transp$, where $\operatorname{dim}\{\gradient\mathcal O(\bm x_0)\}<3$. 

\begin{figure}
    \centering
    \includegraphics[width=\columnwidth]{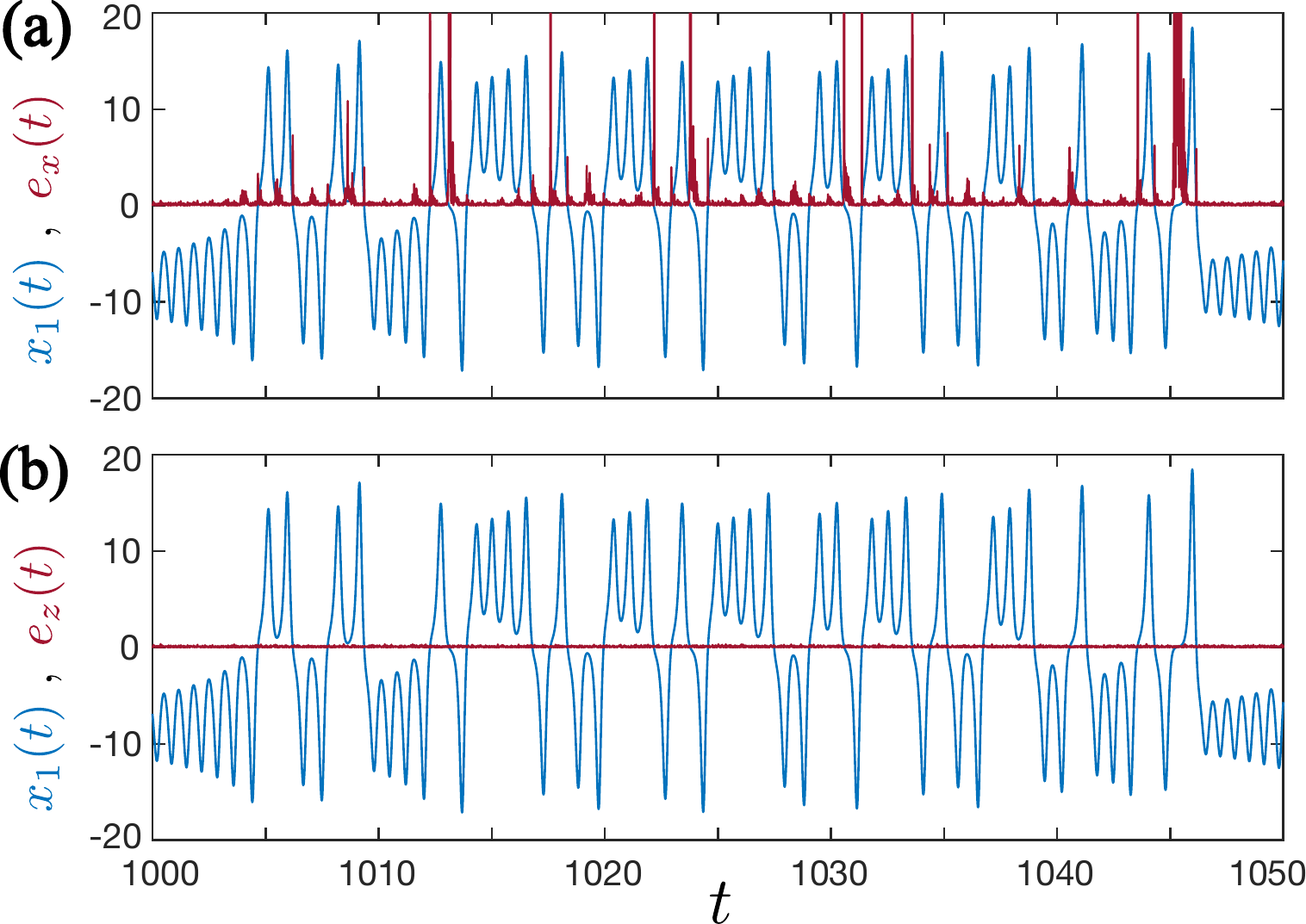}
    \caption{Reconstruction error (red line) of the Lorenz system as a function of time for the (a)~entire state vector $\bm x(t)$ and (b)~functional vector $z(t) = x_1(t)$ sought to be reconstructed. Time series of $x_1(t)$ (blue line) is shown for reference.}
    \label{fig.lorenzreconst}
\end{figure}

Given the differential embedding coordinates $\mathcal E = \{y, \dot y, \ddot y\}$ (Fig.~\ref{fig.summary}b), the entire state vector $\bm x$ can be reconstructed by computing the inverse map $\mathcal O^{-1}(y) : \mathcal E \mapsto \mathcal X$ (where, as a special case, $\bm\Psi(\bm x) = \mathcal O(\bm x)$). Theoretically, existence of this map is not guaranteed only when the system is locally unobservable. In this example, the unobservable subspace corresponds to the exact region in the state space where $x_1 = 0$, which has null Lesbegue dimension. However, in practice, this map degenerates as the system trajectory approaches the neighborhood of $x_1 = 0$, corresponding to a gradual loss of system observability \cite{Letellier2002}. Fig.~\ref{fig.lorenzreconst}a illustrates the reconstruction error $e_x(t) = \norm{\bm x(t) - \hat{\bm x}(t)}$, where $\hat{\bm x} = \mathcal O^{-1}(y,\dot y,\ddot y)$ is the reconstructed (estimated) state vector, considering a small additive noise to the measured time series: $y(t) = h(\bm x,t) + v(t)$, $v(t)\sim\mathcal N(0,10^{-2})$. Note that noise is largely amplified and the reconstruction error increases significantly as the system state approaches the unobservable region ($x_1(t)\rightarrow 0$).

Despite the unobservability at $x_1=0$, the system $\{\bm f,h,g_1\}$ is always functionally observable with respect to the functional $g_1(\bm x) = x_2$, where $\operatorname{row}(\gradient g(\bm x))\subseteq \operatorname{row}(\gradient \mathcal O(\bm x))$, $\forall \bm x\in\R^3$. In practice, $z = g_1(\bm x)$ can be reconstructed from the composition map \eqref{eq.compositionmapPhi} given by
\begin{equation}
    \tilde{\bm x} = 
    \begin{bmatrix}
        y \\ \dot y \\ \xdash[1em] \\ x_3
    \end{bmatrix}
    , \,
    \bm T^{-1}(\tilde{\bm x}) = 
    \begin{bmatrix}
        \tilde{x}_1 \\ \tilde x_1 + \frac{\tilde x_2}{\sigma} \\ \xdash[2em] \\ \tilde x_3
    \end{bmatrix}
    , \,
    \tilde g(\tilde{\bm x}) = y + \frac{\dot y}{\sigma}.
\end{equation}

\noindent
Fig.~\ref{fig.lorenzreconst}b shows the reconstruction error $e_z(t) = \norm{z(t) - \hat z(t)}$, where $\hat{z} = \Phi(y,\dot y)$. As expected, since the system is functionally observable for all $\bm x\in\R^3$, the reconstruction error $e_z(t)$ of the functional is not affected by the unobservable region $x_1=0$ and remains bounded (see also Fig.~\ref{fig.summary}d). 

Fig.~\ref{fig.lorenzcond} presents the coefficients of observability for the Lorenz system. The coefficient of complete observability increases as $x_1(t)\rightarrow 0$, indicating a substantial increase of sensitivity of the local reconstruction map $\bm\Psi^{-1}(\bm x)$ to small perturbations, as observed in the large reconstruction errors $e_x(t)$ for $x_1(t)\rightarrow 0$ in Fig.~\ref{fig.lorenzreconst}a. On the other hand, the coefficient of functional observability remains well-conditioned and constant throughout the entire attractor, which is supported by the insensitivity to noise in the reconstruction error $e_z(t)$ in Fig.~\ref{fig.lorenzreconst}b. These results demonstrate that these coefficients can provide proxy indicators of the ``practical'' consequences of the lack of (functional) observability of these systems as the state approaches (functionally) unobservable regions, where local reconstruction maps become highly sensitivity to noise and, therefore, fail to provide an accurate reconstruction of the original state space.

\begin{figure}
    \centering
    \includegraphics[width=\columnwidth]{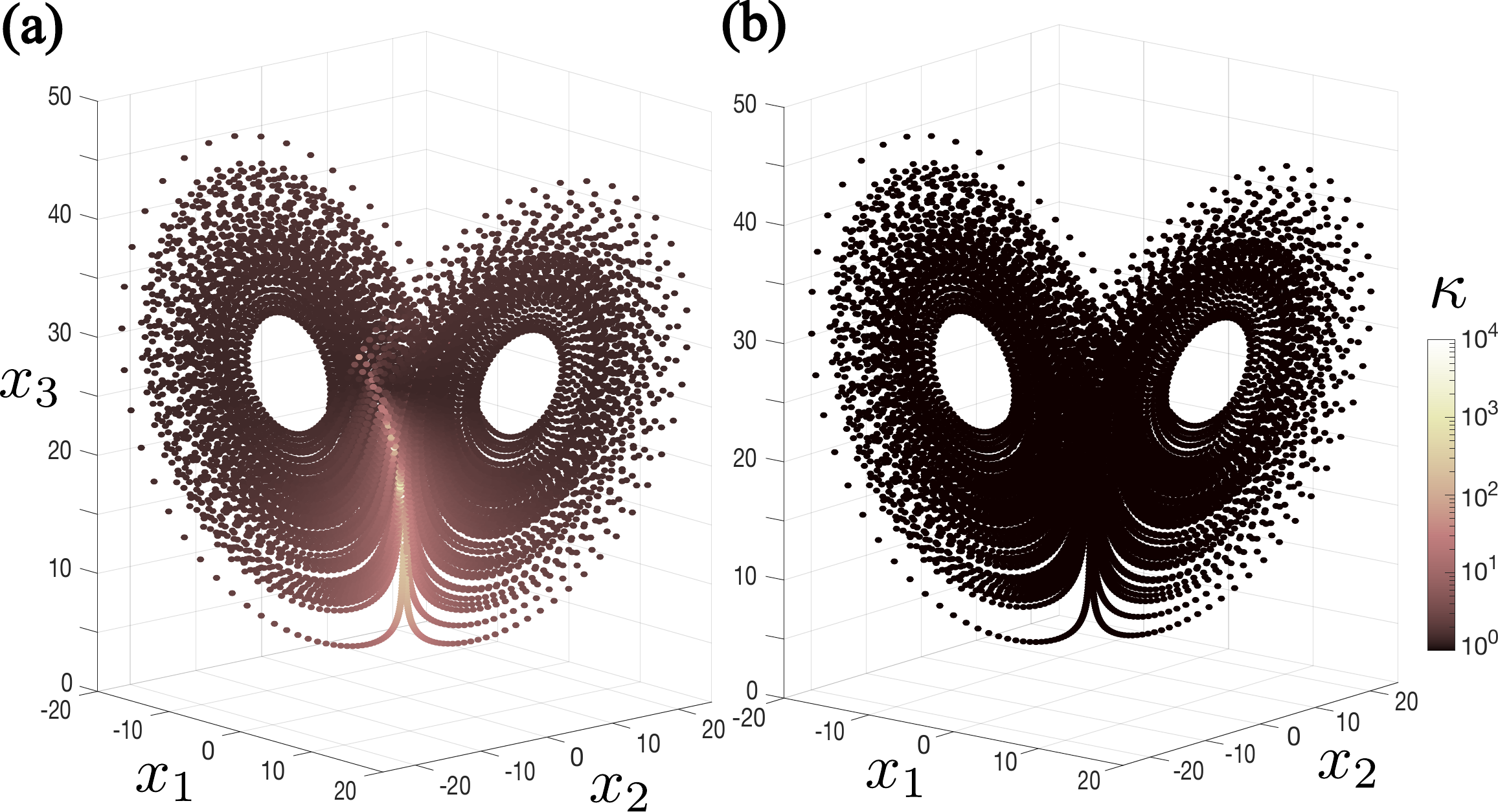}
    \caption{Coefficient of (a)~complete observability and (b)~functional observability ($g_1(\bm x)=x_2$) computed over the state space of the Lorenz attractor.}
    \label{fig.lorenzcond}
\end{figure}

\begin{figure}
    \centering
    \includegraphics[width=\columnwidth]{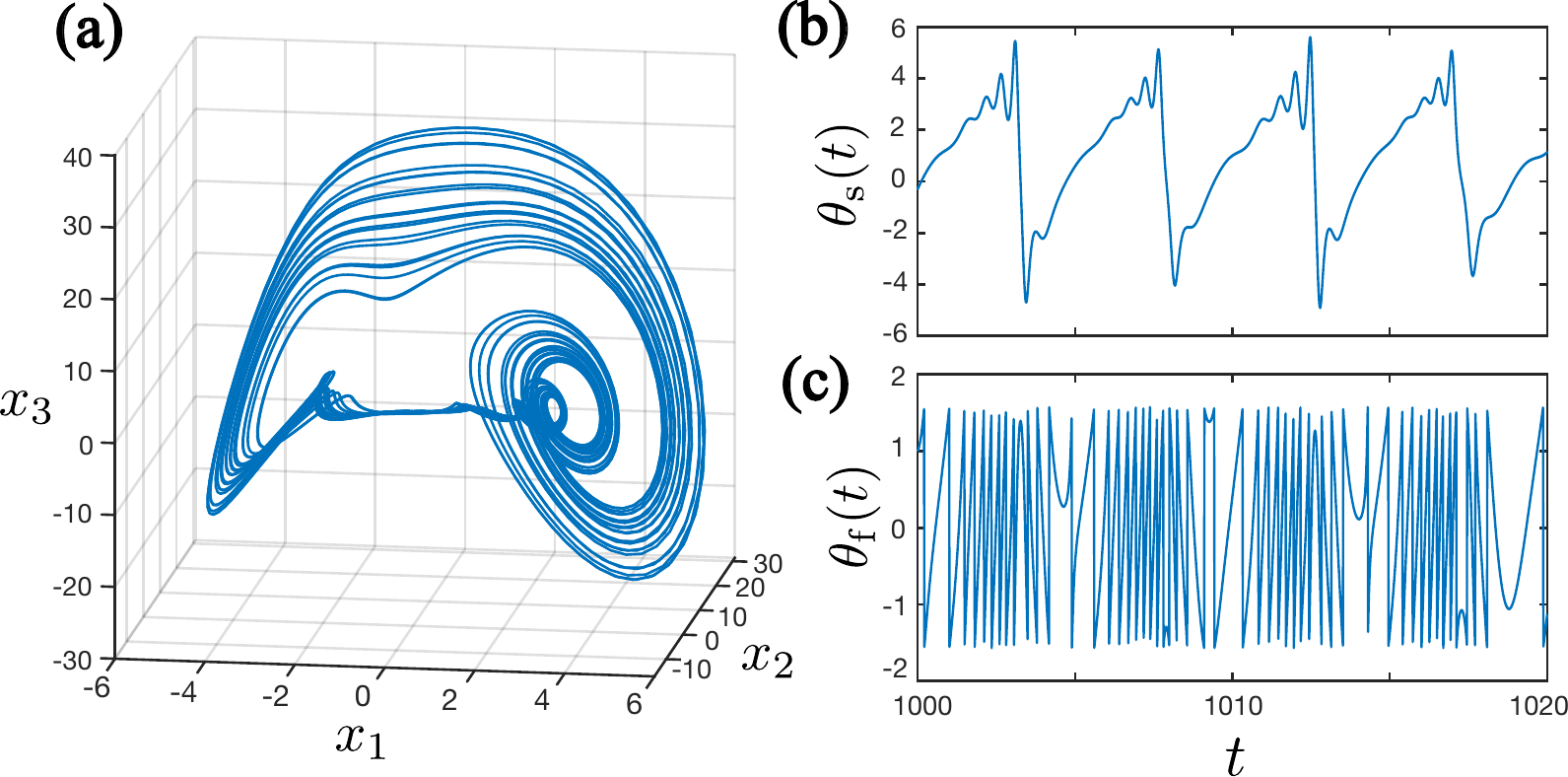}
    \caption{Cord system. (a) State space of the Cord attractor. (b) Time series of the slow phase $\theta_{\rm s}(t)$. (c) Time series of the fast phase $\theta_{\rm f}(t)$.}
    \label{fig.cord}
\end{figure}

The Lorenz system is marked by a clear relation between observability and symmetry. Since $h(\bm x)=x_1$ is directly measured and $g(\bm x) = x_2$ is functionally observable, one can observe (reconstruct) the dynamics in the $(x_1,x_2)$ plane of the Lorenz attractor. The functional observability of this plane is directly related to the global invariance of the Lorenz attractor under the map $[x_1 \,\, x_2 \,\, x_3]^\transp \mapsto [-x_1 \,\, -x_2 \,\,\, x_3]^\transp$ \cite{Letellier2002}. Given that $x_3$ is invariant under this symmetry, one can only distinguish which ``wing'' of the chaotic attractor the system state belongs to at a given time instant $t$ by accurately observing variables $x_1$ and $x_2$ (Fig~\ref{fig.summary}d). Therefore, the functional observability of the triple $\{\bm f,h,g_1\}$ provides the necessary and sufficient information for this characterization based on the measured time series $y(t)$. Moreover, it is evident that the lack of complete observability in the system $\{\bm f,h,g_1\}$ is due to variable $x_3$, which can be rigorously verified by noting that the functional observability condition \eqref{eq.functobsvcondition} is only satisfied for a triple $\{\bm f,h,g_2\}$, $g_2(\bm x)=x_3$, if $x_1\neq 0$.

\subsection{Cord system: fast and slow dynamics}
\label{sec.numresult.cord}

The Cord system, a variation of the Lorenz'84 system, is given by \cite{Letellier2012}
\begin{equation}
\begin{cases}
\dot x_1 = -x_2 - x_3 - ax_1 + aF, \\
\dot x_2 = x_1x_2 - bx_1x_3 - x_2 + G, \\
\dot x_3 = bx_1x_2 + x_1x_3 - x_3,
\end{cases}
\label{eq.cord}
\end{equation}

\noindent
where $(a,b,F,G) = (0.258,4.033,8,1)$. The chaotic attractor is illustrated in Fig.~\ref{fig.cord}a. The system dynamics is marked by two oscillation modes with a clear timescale separation \cite{Freitas2020}. Oscillations in the slow timescale can be approximately monitored by the ``slow phase'' variable $\theta_{\rm s}=x_1$ (Fig.~\ref{fig.cord}b), in which a full revolution of the system is completed every time the trajectory approximates the cord filament close to the origin (defining the Poincaré section $\mathcal P = \{\bm x: x_1 = 0, \dot x_1>0 \}$) \cite{Freitas2018}. Oscillations in the fast timescale, on the other hand, can be monitored by the ``fast phase'' variable $\theta_{\rm f} = \tan^{-1}(x_2/x_3)$ (Fig.~\ref{fig.cord}c).

Here, we consider the measured time series $y = h(\bm x) = x_2$ and that the slow and fast phase variables are the functionals sought to be reconstructed, i.e., $g_1(\bm x) = \theta_{\rm s}$ and $g_2(\bm x) = \theta_{\rm f}$. Fig.~\ref{fig.cordcond} shows the coefficients of observability for the Cord system. Full-state reconstruction of the Cord system from the measured time series is not possible for a considerable range of states in the system trajectory (Fig.~\ref{fig.cord}a), defined by the plane
\begin{equation}
\begin{aligned}
    \det({\mathcal O(\bm x)}) =& \,\,
    b^2 x_3 (-a F + 2 x_1^2 + x_2 + 2 x_3) + 2 b^3 x_1^2 x_2
    \\ 
    & - b (G x_1 - a F x_2 + x_2^2 + 3 x_1 x_3) + x_2^2 
    \\
    =& \,\, 0
    ,
    \label{eq.corddet}
\end{aligned}
\end{equation}

\noindent
and is expected to be ill-conditioned when the system state is close to the vicinity of this plane. The unobservable plane can be visualized in the $(x_2,x_3)$ section of the attractor in Fig.~\ref{fig.cordcond}a.

\begin{figure}
    \centering
    \includegraphics[width=\columnwidth]{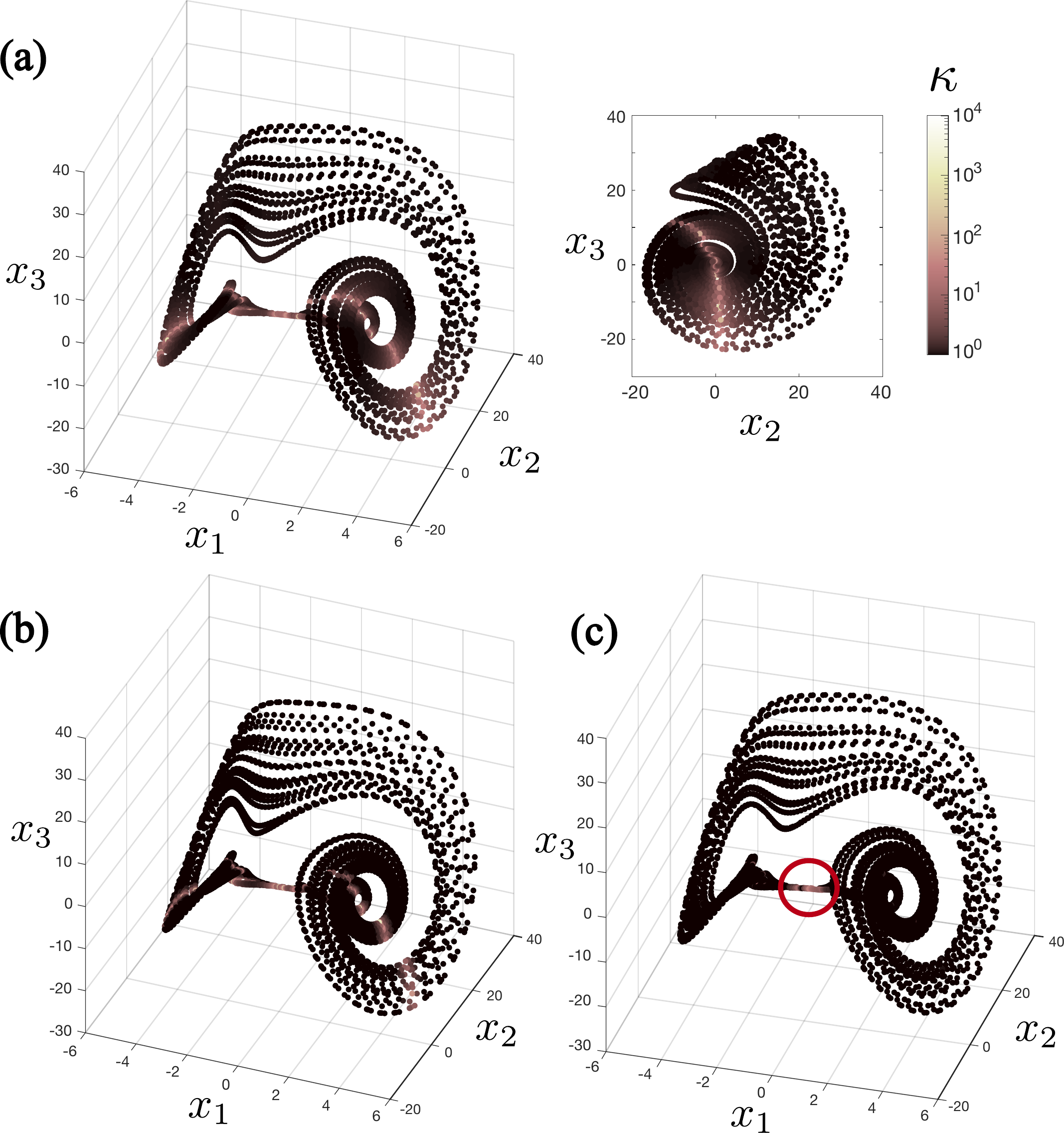}
    \caption{Coefficient of (a)~complete observability, (b)~functional observability with respect to $g_1(\bm x)$, and (c)~functional observability with respect to $g_2(\bm x)$ computed over the state space of the Cord attractor.}
    \label{fig.cordcond}
\end{figure}

Similarly to the full-state reconstruction problem, the reconstruction of the system's slow timescale (i.e., $g_1(\bm x)$) from time-series data of a state variable dominated by a fast timescale (e.g., $x_2$) is hampered by the lack of observability in a large subspace of the state space, as indicated by the regions with very large coefficients of functional observability in Fig.~\ref{fig.cordcond}b. Contrariwise, reconstruction of the fast timescale (i.e., $g_2(\bm x)$) from $x_2$ is well-conditioned throughout the entire system trajectory, except as $(x_2,x_3)\rightarrow (0,0)$ (Fig.~\ref{fig.cordcond}c, red circle). The lack of functional observability at this singularity region in the attractor is not surprising: it corresponds exactly to the region in which the fast phase variable $\theta_{\rm f}$ is a (locally) ill-defined function,
\begin{equation}
    \gradient \theta_{\rm f}(\bm x) =
    \begin{bmatrix}
        0 & -\frac{x_3}{\sqrt{x_2^2 + x_3^2}} & \frac{x_2}{\sqrt{x_2^2 + x_3^2}}
    \end{bmatrix},
\end{equation} 

\noindent
and the fast phase ``collapses'', undergoing an inversion of its rotational direction \cite{Letellier2012}. 

This example illustrates that, even though the system may have a (relatively) large unobservable region $\mathcal X_{\rm u}\subset\mathcal X$, one may find that, even in this unobservable region, the system can still be functionally observable with respect to some functional $\bm g(\bm x)$ aside from a significantly smaller subregion $\mathcal X_{\rm fu}'\subset\mathcal X_{\rm u}$. In this example, the region of interest $\mathcal X$ is the Cord attractor $\mathcal A$, the ``completely'' unobservable region $\mathcal X_{\rm u}$ is the 2-dimensional plane defined by \eqref{eq.corddet}, and the functionally unobservable region is the 1-dimensional line $\mathcal X_{\rm fu} = \{(x_1,0,0) \, | \, \bm x\in\mathcal A\}$.

\subsection{Hindmarsh-Rose system: a neuron model}
\label{sec.numresult.neuron}

Building up from the chaotic benchmarks, we now consider a phenomenological model of neuron dynamics given by the Hindmarsh-Rose (HR) model \cite{Hindmarsh1984}:
\begin{equation}
\begin{cases}
\dot x_1 = x_2 - ax_1^3 + bx_1^2 - x_3 + I, \\
\dot x_2 = c - dx_1^2 - x_2, \\
\dot x_3 = r(sx_1 - x_{\rm R}) - x_3,
\end{cases}
\label{eq.hrneuron}
\end{equation}

\noindent
where $x_1$ is the membrane potential, $x_2$ is the fast recovery current, and $x_3$ is the slow adaptation current. Providing both a simplification of the biophysical Hodgkin-Huxley neuronal model and a generalization of the FitzHugh-Nagumo model, the HR model can reproduce a wide range of dynamical behaviors, including quiescence and (irregular) spiking and bursting \cite{Storace2008}. Moreover, depending on the bifurcation parameters, this system can also shift to chaotic regimes, as investigated both computationally \cite{Storace2008} and experimentally \cite{Gu2013}. Here, we consider the set of parameters lying in the chaotic regime: $(a,b,c,d,I,r,s,x_{\rm R}) = (1,3,1,5,3.25,0.001,4,-8/5)$. 

The measurement functions $h_1(\bm x) = x_1$, $h_2(\bm x) = x_2$, and $h_3(\bm x) = x_3$ yield observability matrices with determinants given by, respectively, $\det(\mathcal O_1(\bm x)) = r-1$, $\det(\mathcal O_2(\bm x)) = 4d^2 x_1^2$, and $\det(\mathcal O_3(\bm x)) = r^2s^2$. Thus, for the considered set of parameters, $\{\bm f,h_1\}$ and $\{\bm f,h_3\}$ are locally observable everywhere, while $\{\bm f,h_2\}$ becomes locally unobservable only at $x_1 = 0$ \cite{Aguirre2017}. Accordingly, the coefficients of complete observability show a considerable increase as $x_1\rightarrow 0$ (Fig.~\ref{fig.hrneuroncond}a).

\begin{figure*}
    \centering
    \includegraphics[width=\linewidth]{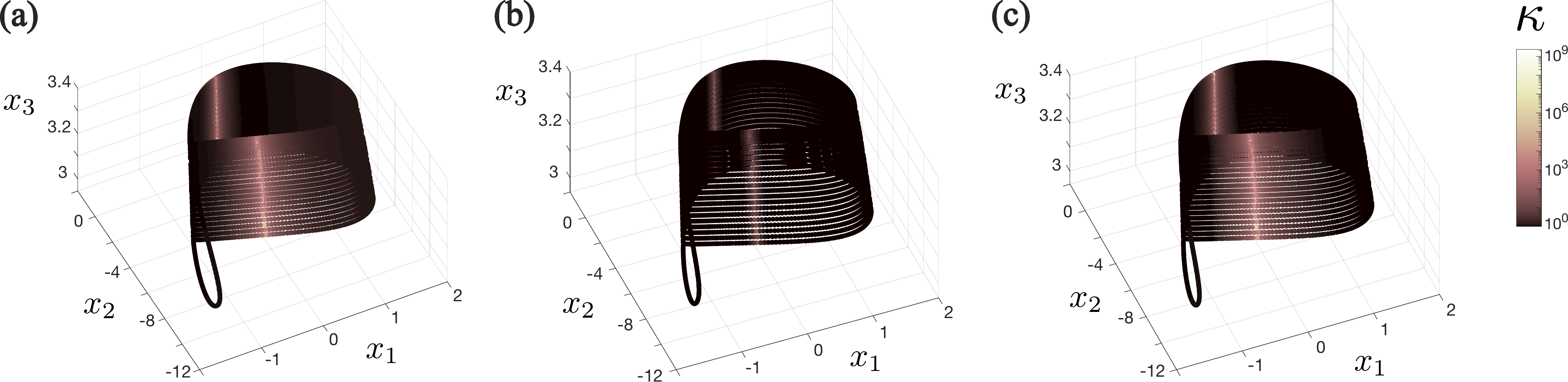}
    \caption{Coefficients of (a) complete observability, (b) functional observability with respect to $g_1(\bm x)$, and (c) functional observability with respect to $g_2(\bm x)$ computed over the state space of the HR neuron model, considering the measured time series $y = h_2(\bm x)$. Simulations are presented for $(T,T_{\rm trans}) = (2500, 1500)$.}
    \label{fig.hrneuroncond}
\end{figure*}

\begin{figure}
    \centering
    \includegraphics[width=0.95\columnwidth]{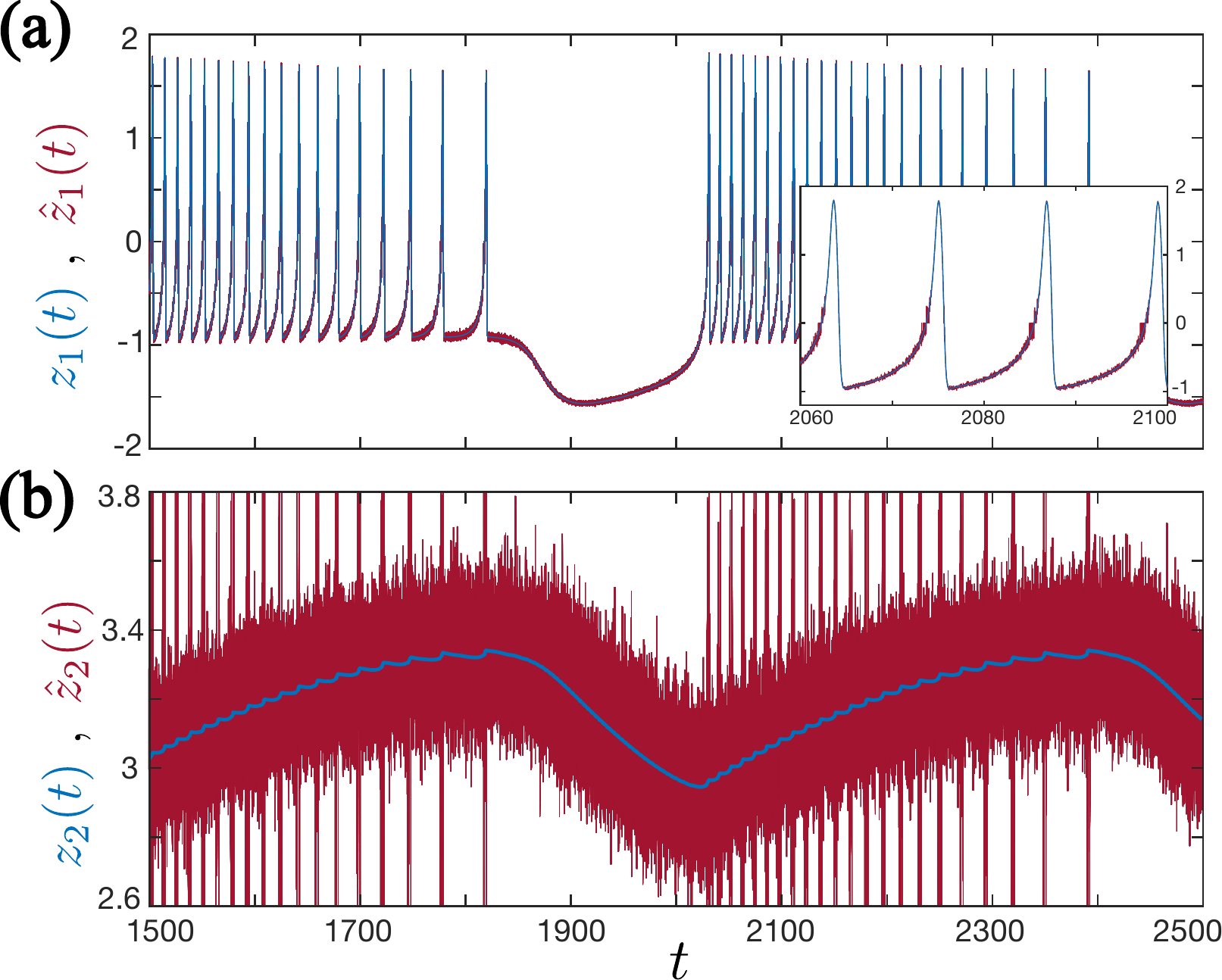}
    \caption{Reconstructed functional states (a) $\hat z_1$ and (b) $\hat z_2$ as a function of time (red lines) for the HR neuron model. Time series of $z_1=g_1(\bm x)$ and $z_2=g_2(\bm x)$ (blue lines) are shown for reference. Simulations are presented for $(T,T_{\rm trans}) = (2500, 1500)$.}
    \label{fig.hrneuronerror}
\end{figure}

One might wonder if, despite the lack of complete observability at $x_1=0$, the system $\{\bm f,h_2,g_i\}$ is still functionally observable with respect to, for example, $g_1(\bm x) = x_1$ or $g_2(\bm x) = x_3$. However, unlike the previous examples, the HR model remains locally unobservable at $x_1 = 0$ with respect to both functionals, as observed in the coefficients of functional observability shown in  Fig.~\ref{fig.hrneuroncond}b,c. Nevertheless, note that the neighborhood of $x_1 = 0$ where the reconstruction map is (locally) ill-conditioned is substantially smaller for $\{\bm f,h_2,g_1\}$ compared compared to $\{\bm f,h_2,g_2\}$. These results suggest that reconstruction of $g_1(\bm x)$ is more reliable than $g_2(\bm x)$ in the presence of small perturbations as $x_1\rightarrow 0$. 

Examining the local maps yield
\begin{align}
    \gradient g_1 \gradient\mathcal O_2^{-1} &= 
    \begin{bmatrix}
        -\frac{1}{2dx_1} & -\frac{1}{2dx_1} & 0
    \end{bmatrix},
    \label{eq.hrneuron.localmap1}
    \\
    \gradient g_2 \gradient\mathcal O_2^{-1} &= 
    \begin{bmatrix}
        1-\frac{\xi}{2dx_1^2} & \frac{x_1 + \xi}{2dx_1^2} & \frac{1}{2dx_1} 
    \end{bmatrix},
     \label{eq.hrneuron.localmap2}
\end{align}

\noindent
where $\xi = I + x_2 - x_3 - 4ax_1^3 + 3bx_1^2$. The presence of the terms $x_1$ and $x_1^2$ in the denominator of Eqs.~\eqref{eq.hrneuron.localmap1} and \eqref{eq.hrneuron.localmap2} elucidate the results shown in Fig.~\ref{fig.hrneuroncond}b,c. The sensitivity to small perturbations in the reconstruction of functional $g_1(\bm x)$ is only inversely proportional to the distance between $x_1$ and the unobservable region, whereas the sensitivity of the reconstruction of $g_2(\bm x)$ is inversely proportional to the \textit{quadratic} of this distance\textemdash leading to a highly ill-conditioned map for $|x_1|\ll 1$. This theoretical (local) analysis is also supported by computing the reconstruction maps $\bm\Phi : \mathcal E \mapsto\mathcal G(\mathcal X)$ and evaluating the corresponding reconstruction performance for each functional. Fig.~\ref{fig.hrneuronerror} shows that, in the presence of small measurement noise $v(t)\sim\mathcal N(0,0.01)$, reconstruction of $\hat z_2 = g_2(\bm x)$ yields very poor results, with a high root-mean-square error (RMSE) of 0.4006, compared to the RMSE of 0.0242 for the reconstructed vector $\hat z_1 = g_1(\bm x)$.

As in the Cord example, reconstruction of the slow timescale dynamics ($g_2(\bm x) = x_3$ in the HR model) from time-series data corresponding to a variable dominated by the fast timescale ($h_2(\bm x) = x_2$) is marked by the presence of unobservable regions which significantly hamper the quality of the reconstruction in the vicinity of these regions. On the other hand, 
measuring a variable dominated by the fast timescale can still provide accurate reconstruction of other variables dominated by the same timescale ($g_1(\bm x) = x_1$). This relation between the timescale separation and functional observability of a system, with respect to variables belonging to the same or different timescales than the measured variable, can be observed both in the Cord and HR models.

\section{Early warning of seizures}
\label{sec.Epileptor}

\begin{figure*}
    \centering
    \includegraphics[width=0.7\linewidth]{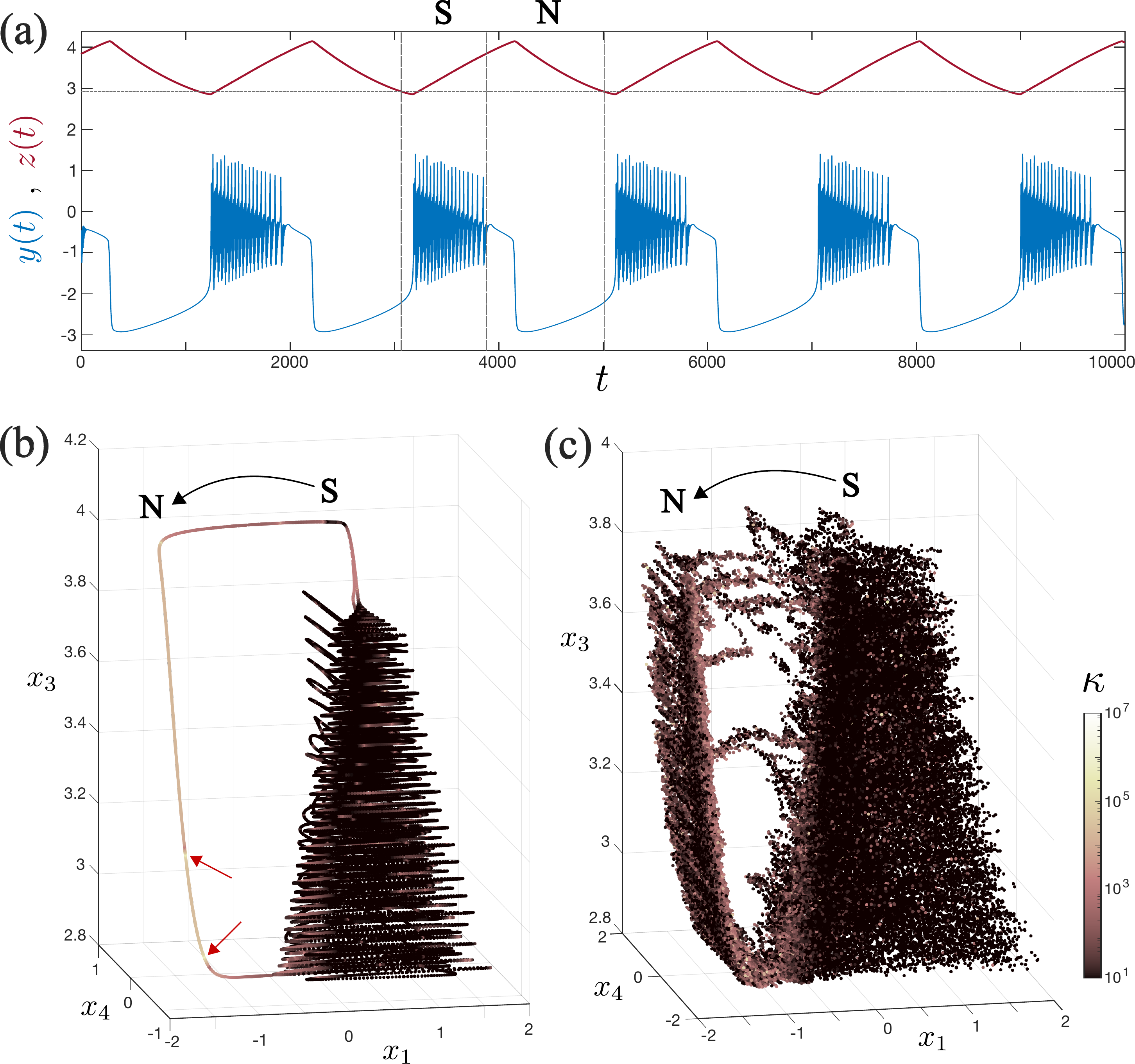}
    \caption{Functional observability of the Epileptor model. 
    (a)~Time series of the measured signal $y(t)$ (modeling EEG data) and the functional $z(t)$ (slow permittivity variable) for the deterministic model.
    (b,c)~Coefficients of functional observability computed over the state space of the Epileptor's attractor, considering (b) deterministic and (c) stochastic representations of the Epileptor.
    Red arrows point highly functionally unobservable states in the attractor. Transitions from seizure (S) to normal (N) regime in the Epileptor are indicated in the plots.
    Simulations are presented for $(T,T_{\rm trans},d{\rm t}) = (10^4,0,0.01)$ and $\bm x(0) = [0 \,\, {-5} \,\, 3 \,\, 0 \,\, 0]^\transp$. For the stochastic model, the model \eqref{eq.Epileptor} was numerically integrated using Euler-Maruyama method where additive process noises $\mathcal N(0,0.01)$ and $\mathcal N(0,0.1)$ were introduced to subsystems $(x_1,x_2)$ and $(x_4,x_5)$, respectively.}
    \label{fig.epileptorcond}
\end{figure*}

Characterizing and predicting epileptic seizures are long-standing challenges in clinical neuroscience \cite{mormann2007seizure,cook2013prediction}. Accurate and interpretable methods for the prediction of seizure events will drastically improve epilepsy management, providing early warnings to alert patients or trigger interventions \cite{kuhlmann2018epilepsyecosystem}. On top of black-box and data-greedy deep learning algorithms, dynamical-based topological analysis can concur in discovering universal routes to epilepsy and foster new methods for early warning, many of which can be based on the embedding of time-series data \cite{yuan2008comparison,Jirsa2014}.
We investigate the observability and embedding properties of such application by considering a dynamical model describing seizure dynamics in the brain. The model, termed Epileptor \cite{Jirsa2014}, involves bifurcation dynamics to reproduce resting, spiking, and bursting behaviors observed in electroencephalogram (EEG) signals, modeling the multiple timescale oscillations recorded during epileptic seizures.
The Epileptor is defined by \cite{Jirsa2014}
\begin{equation}
\begin{cases}
\dot x_1 = x_2 - f(x_1,x_4) - x_3 + I_{1},\\
\dot x_2 = r_2 - 5x_1^2 - x_2,\\
\dot x_3 = \frac{1}{\tau_0} (4(x_1 - r_1) - x_3),\\
\dot x_4 = -x_5 + x_4 - x_4^3 + I_{2} + 0.002g(x_1) - 0.3(x_3 - 3.5),\\
\dot x_5 = \frac{1}{\tau_2} (-x_5 + f_2(x_4)),
\end{cases}
\label{eq.Epileptor}
\end{equation}

\noindent
where $(r_{1}, r_{2}, I_{1}, I_{2},\gamma) = (-1.6, 1, 3.1, 0.42, 0.01)$ are the system parameters, $(\tau_0,\tau_2) = (2857, 10)$ are the timescale constants, and the coupling functions are given by
\begin{equation}
    \begin{aligned}
        g(x_1) &= \int_{t_0}^t \exp(-\gamma (t-\tau))x_1(\tau){\rm d}{\tau}, \\
        f_1(x_1,x_4) &= 
        \begin{cases}
        x_1^3 - 3x_1^2, \quad & x_1 < 0, \\
        (x_4 - 0.6(x_3 - 4)^2)x_1, \quad & x_1 \geq 0, \\
        \end{cases} \\
        f_2(x_4) &= 
        \begin{cases}
        0, \quad & x_4 < -0.25, \\
        6(x_4 + 0.25), \quad & x_4 \geq -0.25. \\
        \end{cases}
    \end{aligned}
\end{equation}

\noindent
This model consists of three subsystems with different timescales: $(x_1,x_2)$ governs the system's oscillatory behavior, $(x_4,x_5)$ introduces the spikes and wave components typical in seizure-like events, and $x_3$ represents a slow permittivity variable that determines how close the system is to the seizure threshold. Due to the slow-fast timescale separation induced by $\tau_0$, $x_3$ is usually interpreted as a quasi-steady state parameter \cite{Jirsa2014}, enabling bifurcation analysis. 

\subsection{Functional observability analysis}
\label{sec.epileptor.obsv}

Monitoring the permittivity variable $x_3$ provides an early-warning signal of a dynamical transition from normal to seizure states in the Epileptor model. Despite the phenomenological nature of the model, this permittivity variable is most likely related to slowly changing biophysical parameters (e.g., extracellular processes or ionic concentrations) \cite{Jirsa2014}. Given that such parameters are hardly measurable in biomedical setups, we investigate whether it is possible to infer the permittivity variable (i.e., the functional $z = g(\bm x) = x_3$) from more easily accessible measurements, such as EEG recordings of seizure-like events (modeled as the measurement signal $y = h(\bm x) = x_1 + x_4$ due to its close resemblance to actual EEG data \cite{Jirsa2014}). Fig.~\ref{fig.epileptorcond}a illustrates the dynamics of the functional $z(t)$ and output $y(t)$. Under the assumption that the Epileptor is a proper representation of the underlying process, a functional observability analysis of model \eqref{eq.Epileptor} can establish if it is feasible to reconstruct this functional and, therefore, provide an early-warning signal of seizure events from EEG data.

Due to the model complexity, an analytical derivation of the functionally unobservable regions of the Epileptor model is hardly tractable. Instead, Fig.~\ref{fig.epileptorcond}b,c presents the coefficients of functional observability of the triple $\{\bm f,h,g\}$ computed over the system's attractor. The system alternates between two regions of the attractor, the normal state and the seizure state, as $x_3$ crosses predetermined thresholds marking bifurcation points (i.e., points in the parameter space where qualitative changes in system dynamics occur). While the coefficients $\kappa$ are fairly well-conditioned in the seizure region of the attractor ($\kappa<10^2$), the normal region has relatively larger coefficients ($\kappa \approx 10^{4}$) with two remarkable  ill-conditioned singularities ($\kappa > 10^{7}$) highlighted in Fig.~\ref{fig.epileptorcond}b. This indicates the presence of two functionally unobservable states in the normal region of the Epileptor's state space, one of them located exactly at the saddle-node bifurcation point from normal to seizure regime ($x_3 \approx 2.9$, $\dot x_3<0$ \cite{el2020epileptor}). Introducing linear additive process noise to the model \eqref{eq.Epileptor} promotes a larger exploration of system's state space, uncovering other functionally unobservable singularities in the normal region (Fig.~\ref{fig.epileptorcond}c), including a few in the seizure region. Nonetheless, the analysis remains qualitatively similar between the deterministic and stochastic systems: both show considerably larger values of $\kappa$ in the normal region compared to the seizure region (see also Fig.~\ref{fig.ewsepileptor}b). Consequently, high errors are expected in the reconstruction of the permittivity variable from the measured signal $y(t)$ during the normal regime of the Epileptor.

\subsection{Early-warning signals and observability}

At first, large coefficients of functional observability in the Epileptor's normal region indicate that reconstructing the slow permittivity variable from EEG data is particularly challenging. However, our analysis established an interesting relation between the Epileptor's observability and topological features: normal (seizure) regions of the attractor correspond to regions with large (small) coefficients of functional observability. This relation can be explored to develop early-warning indicators of seizure-like events in simulated and empirical data.

\begin{figure}
    \centering
    \includegraphics[width=0.9\columnwidth]{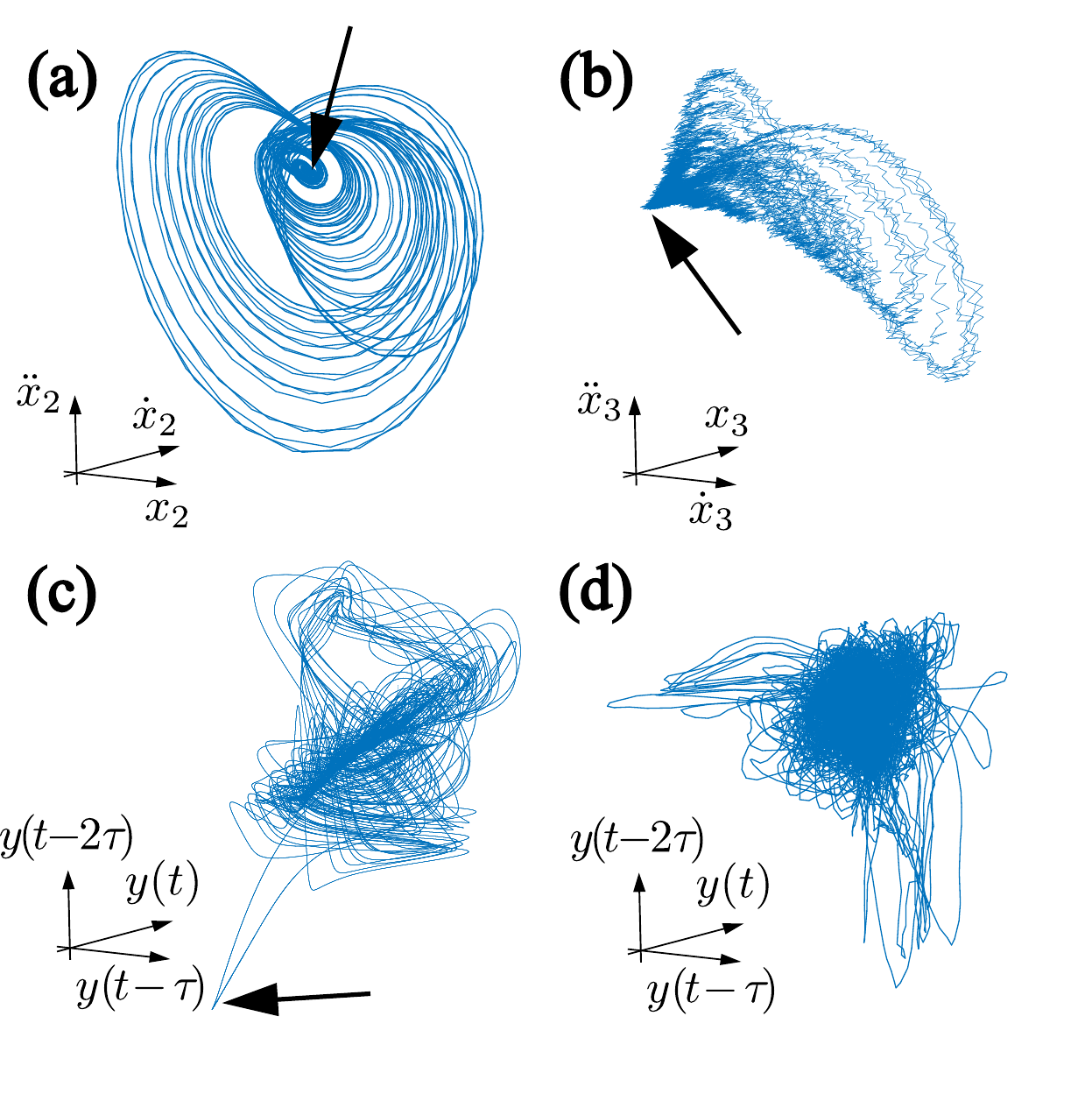}
    \caption{Embedded attractors of different dynamical systems. Differential embedding coordinates of the (a) Cord system and (b) R\"ossler system. Time-delay embedding coordinates of the (c) Epileptor model and (d) human EEG data.
    Unobservable regions in subplots (a,b,c) are pointed out by arrows.}
    \label{fig.embeddingattractors}
\end{figure}

Typical early-warning signals of critical transitions studied in the literature, such as variance and autocorrelation, are computed from time-series data \cite{Scheffer2009}. Evaluating the system's observability, on the other hand, requires prior knowledge of the system's equations \eqref{eq.nonlinearsys}, often absent for real-world systems. Theory states that unobservable regions in the state space are associated to the loss of dimension of the observable space (i.e., condition \eqref{eq.functobsvcondition} does not hold). As a consequence, closer to unobservable regions, embedded trajectories squeeze into a small low-dimensional neighborhood due to the loss of diffeomorphism between the embedding space and the original state space \cite{Letellier2005}. This phenomenon is illustrated in Fig.~\ref{fig.embeddingattractors} for the embedded attractors of different dynamical systems with poorly observable regions as well as real-world data. Such local topological feature can be assessed by monitoring the smallest singular value $\sigma_{d_e}$ computed from an embedded time series with embedding dimension $d_e$ (see Appendix~\ref{app.timeseriesobsv} for details). As $\sigma_{d_e}\rightarrow 0$, the effective dimension of the embedded time series drops, implying that the diffeomorphism between the embedded and original attractors is not locally preserved (and, therefore, the system is locally unobservable). In what follows, we apply the coefficient $\sigma_{d_e}$, hereby referred to as ``time series-based singular value decomposition'' (tSVD), as a proxy measure of the system's observability computed from time-series data, which, as we show next, has a high correlation with the coefficients of observability (Fig.~\ref{fig.ewsepileptor}d).

\begin{figure*}[tb]
    \centering
    \includegraphics[width=0.98\linewidth]{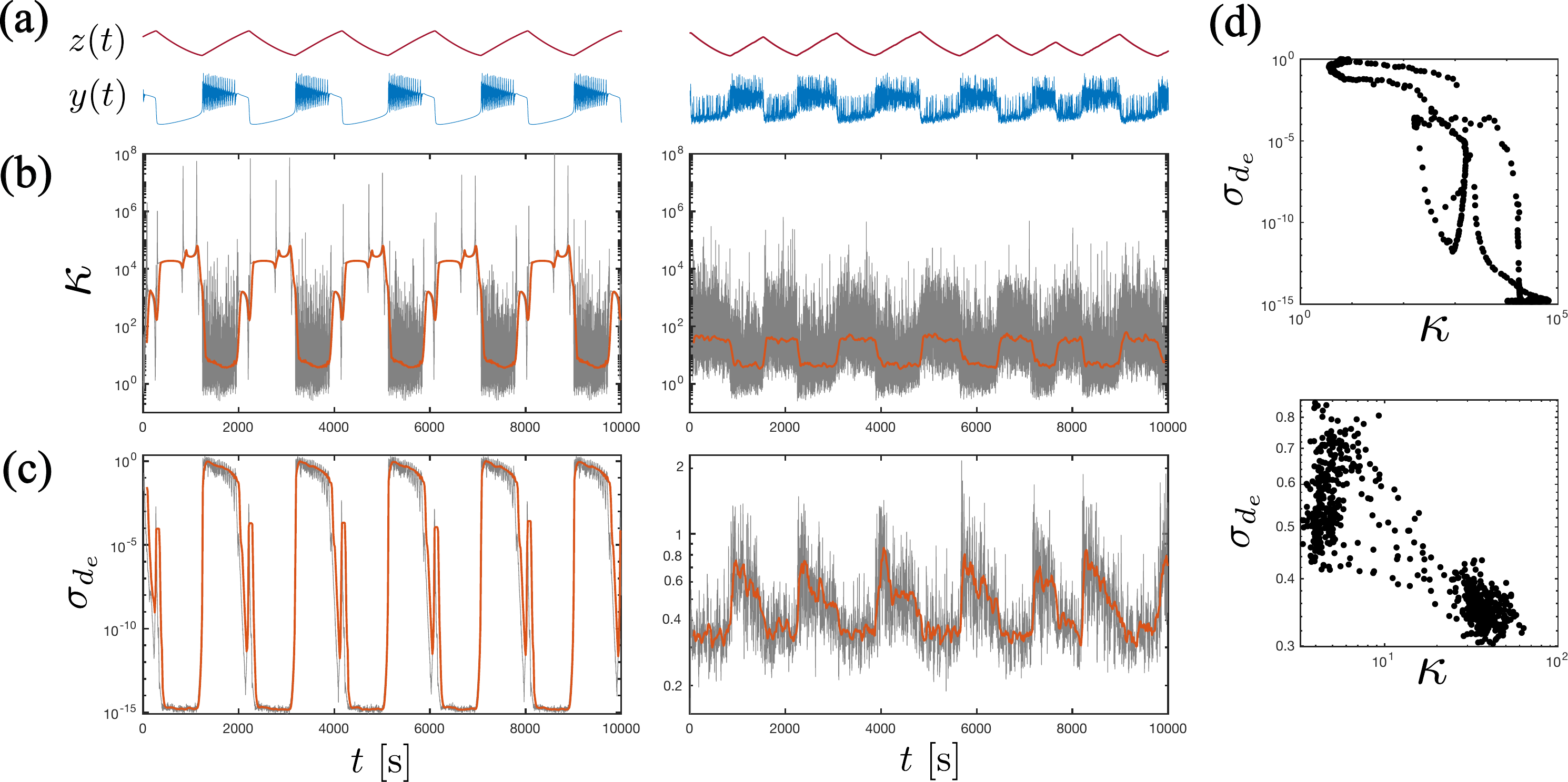}
    \caption{Observability and early-warning signal of the deterministic (left) and stochastic (right) Epileptor models. 
    (a) Time series of the functional $z(t)$ and the measured signal $y(t)$. In the stochastic case, noise can trigger more transitions for the same time interval.
    (b) Coefficient of functional observability $\kappa$ as a function of time. 
    (c) tSVD $\sigma_{d_e}$ as a function of time, computed over a moving time-series window with length $N = 5$s and embedding parameters $(d_e,\tau) = (5, 0.1{\rm s})$. 
    Orange curves show the smoothed coefficients computed over a moving average window with length $N_{\rm avg}=150$s, sampled every $30$s.
    (d) Correlation between $\kappa$ and $\sigma_{d_e}$ for the deterministic (top) and stochastic (bottom) system.}
    \label{fig.ewsepileptor}
\end{figure*}

Fig.~\ref{fig.ewsepileptor}a--c shows the coefficient of functional observability and the tSVD for a given time series $y(t)$. For the deterministic model, $\kappa$ and $\sigma_{d_e}$ are anti-correlated: as $\kappa$ increases (decreases) during normal (seizure) regimes of the Epileptor, the tSVD decreases (increases). The anti-correlation is confirmed by a Pearson's correlation index of $\rho = -0.96$ between both coefficients in logarithmic scale (Fig.~\ref{fig.ewsepileptor}d, top). 
As expected, when the Epileptor switches to the normal region, which is poorly functionally observable ($\kappa>10^{4}$), the smallest singular value $\sigma_{d_e}$ tends to zero ($\sigma_{d_e}\approx 10^{-16}$), implying an effective loss of dimension of the embedded time series. For the stochastic model, the broader state-space exploration of this system yields a higher variation of $\kappa$ and $\sigma_{d_e}$. Nevertheless, the same anti-correlation between $\kappa$ and $\sigma_{d_e}$ can be observed by smoothing the coefficients over a moving average window. On average, $\kappa$ ($\sigma_{d_e}$) increases (decreases) during the normal regime of the Epileptor, yielding a Pearson's correlation index of $\rho = -0.82$ (Fig.~\ref{fig.ewsepileptor}d, bottom).

\begin{figure}
    \centering
    \includegraphics[width=0.95\columnwidth]{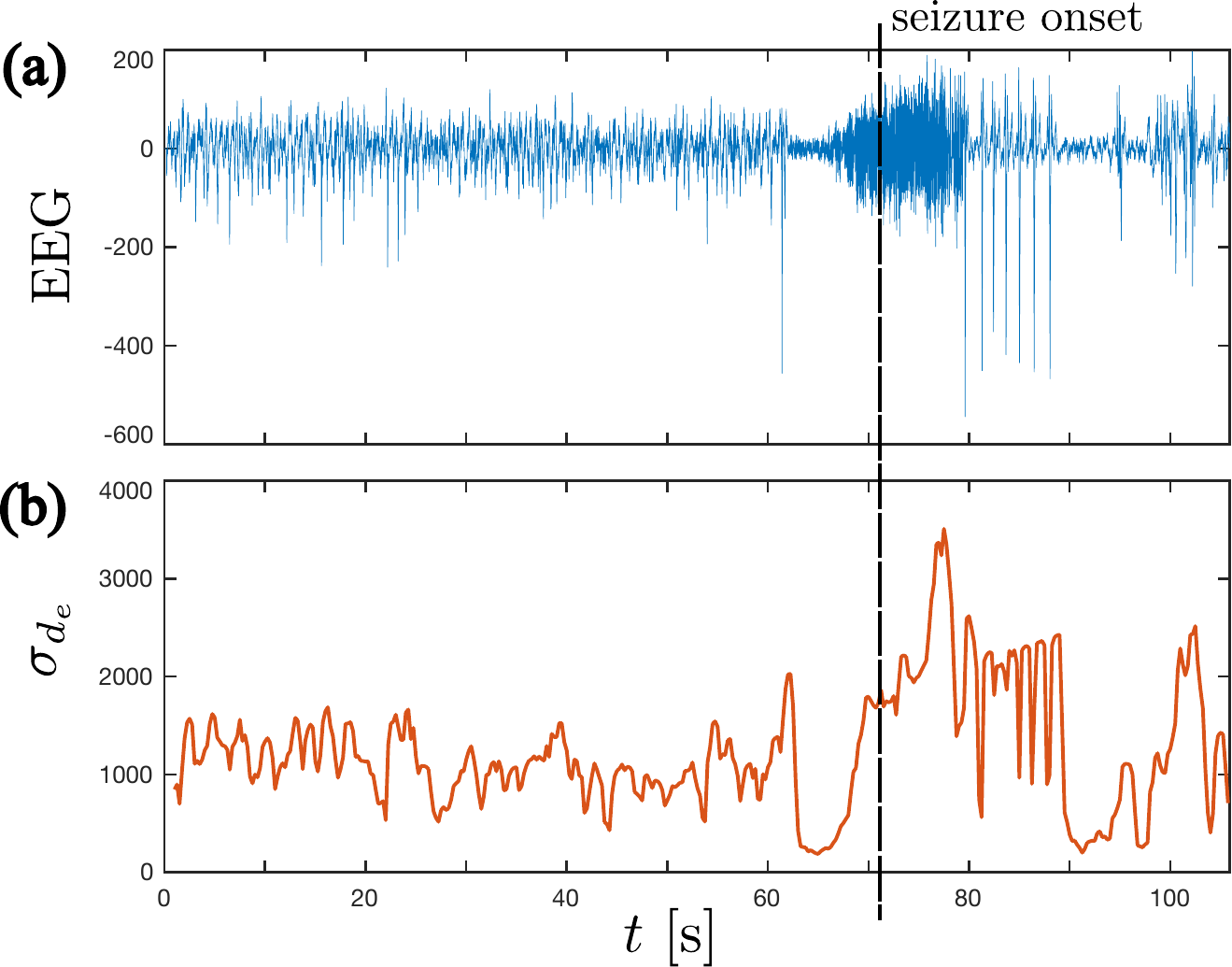}
    \caption{Early warning of seizure events in human intracranial EEG data. (a) Intracranial EEG data (channel 2) of patient 10 in database \cite{KAROLY2018}. (b) tSVD $\sigma_{d_e}$ computed over a moving time-series window with parameters $(d_e,\tau,N) = (5,0.05{\rm s},1{\rm s})$. Seizure onset (informed by expert opinion) is marked by the dashed line.}
    \label{fig.ewshuman}
\end{figure}

The behavior of the tSVD measure is remarkably aligned with that of typical early-warning signals used to detect critical transitions from time-series data \cite{kuehn2011mathematical}, including those for seizure warning from EEG data \cite{Maturana2020}. Indeed, there is a sharp increase of the tSVD close to the dynamical transition from normal to seizure state (Fig.~\ref{fig.ewsepileptor}c). This may be attributed to the Epileptor's unobservability at the saddle-node bifurcation point, as pinpointed in Section~\ref{sec.epileptor.obsv}. Fig.~\ref{fig.embeddingattractors}c shows that the embedded trajectories squeeze to singularity point at the unobservable (bifurcation) point\textemdash a feature that can be further explored for early-warning detection of seizure events. This topological characteristic of the embedded attractor is present not only in the Epileptor model but also in real data, as shown in Fig.~\ref{fig.embeddingattractors}d for the embedding space constructed from human intracranial EEG data (public data available at \cite{KAROLY2018}; sampling protocols and preliminary analysis are described in \cite{cook2013prediction}). Further computing the tSVD in human EEG data provides interesting results as illustrated for a representative patient in Fig.~\ref{fig.ewshuman}: the seizure onset is often preceded by a decrease of $\sigma_{d_e}$ followed by a sharp increase close to the critical transition, a characteristic that may be explored for real-time monitoring and detection of seizure events. The same pattern was observed for all patients in the considered dataset, although a thorough statistical investigation of the tSVD as a early-warning signal of seizure events (and other critical transitions in complex systems) is left for future work.

\section{Discussion}
\label{sec.conc}

The established relation between observability and embedding theories opens a new research direction of special interest to (nonlinear) time-series analysis. Our theory formally determines the conditions for reconstructing the system state from time-series data, often low-dimensional or univariate. In fact, measuring every relevant state variable is in practice constrained by physical limitations or operational costs. Hence, indirect estimation of unmeasured variables is required for the observation of physical, biological, ecological, and other complex dynamical systems.

For applications which require reconstructing only a few key state variables or lower-dimensional subspaces, we formalize the notion of functional observability for nonlinear systems. Our results can provide \textit{a priori} knowledge of the reconstruction limitations and embedding features, depending on the available time-series data. We show that, even if a system is not completely observable (reconstructible), it may still be functionally observable with respect to the variables or subspace of interest. This provides useful insights about the dynamical system's properties and can be used to guide experimental design and data-processing methods according to the investigated hypotheses and available measurement processes. \\



In the context of systems biology, observing system dynamics is often hampered by technical limitations that prevent the simultaneous measurement of multiple biophysical variables (e.g., multiple ion channels in single neurons). By identifying conditions for accurate inference of variables from time-series data, the presented functional observability analysis can thus guide experimental design.
Consider, for example, the HR neuron model investigated in Section~\ref{sec.numresult.neuron}. The fast recovery current and the slow adaptation current represented by the system's state variables are related to transport rates of fast (e.g., sodium and potassium) and slow (e.g., calcium) ion channels, respectively \cite{Gu2013}. Our analysis reveals structural limitations in the HR neuron model that prevent an accurate inference of calcium flux from measures of sodium/potassium fluxes. The opposite, instead, seems feasible, given that the system is completely observable everywhere when inferring sodium/potassium flux from measures of calcium flux. 

Likewise, in a biomedical context, evaluating the functional observability of the Epileptor model shows that reconstructing the slow permittivity variable from EEG data ($y = x_1 + x_4$) is complicated by the system's poor observability in the attractor's normal state  (Section~\ref{sec.Epileptor}). Contrariwise, independently measuring the state variables $x_1$ and $x_4$ (i.e., $y = [x_1 \,\,\, x_4]^\transp$) yields well-conditioned coefficients of observability throughout the entire attractor. This suggests that applying data pre-processing methods in EEG time series to uncouple the oscillatory behavior (modeled by $x_1$) from the spikes and wave components (modeled by $x_4$) may lead to better performance in reconstructing the permittivity variable for early-warning of seizures.\\


In addition to applications, the proposed theory opens new theoretical research directions for many disciplines.
First, the Cord and HR neuron examples show interesting links between the functional observability of a system and its intrinsic timescales. In both cases, high reconstruction errors stem from estimating slow variables from measures of fast variables. Future works can formally explore this interesting relation, complementing the analysis for linear systems \cite{Berger2017}, by extending the notion of functional observability to (nonlinear) differential-algebraic systems of form
\begin{equation}
    \begin{cases}
        \dot{\bm x}_1 = \bm f_1(\bm x_1, \bm x_2), \\
        0 = \bm f_2(\bm x_1, \bm x_2),
    \end{cases}
\end{equation}

\noindent
where a strong timescale separation arises from a quasi-steady-state assumption ($\dot {\bm x}_2 \approx 0$).

Second, our analysis of the Epileptor model shows a potential relation between the system's observability and its bifurcation points. Whether the loss of observability close to critical transitions is a universal behavior or a particularity of the Epileptor remains to be investigated. In ecological networks, time-series data of variables that make the system completely observable often lead to earlier warning of critical transitions \cite{Aparicio2021}. Our time-series-based coefficient tSVD, aside from indirectly quantifying observability, may also capture features related to the Central Limit Theorem \cite{Haragus2010} (that, close to bifurcation points, dynamical systems can be locally reduced to low-dimensional normal forms). However, it is still to be investigated whether our framework only applies to transitions induced by local bifurcations, or it can be extended to other types like boundary crisis involving chaotic attractors \cite{tantet2018resonances,tantet2018crisis}. The potential use of tSVD for early-warning detection of critical transitions in complex systems, similar to other signals like increasing variance and autocorrelation \cite{kuehn2011mathematical,proverbio2022buffering}, may lead to promising theoretical developments and applications.

Third, our theory fosters data-driven methods for the automated construction of the embedding space, and its map $\bm \Phi$ to the original system's attractor, in applications where analytical analysis of the model is untractable (e.g., due to unknown parameters or high dimensionality). This would thus extend previous works on automated embedding construction \cite{Sadeghzadeh2022} and full system identification from embedding coordinates \cite{Brunton2016}. Although our application examples focus on univariate measurements and functionals, the theory is formalized for multivariate cases ($q, r \geq 1$) and can be directly applied to determine the existence and conditioning of such map, assessing how good the reconstruction is expected to be (locally).

Finally, for the study of high-dimensional problems, our results call for extensions based on graph-theoretical conditions \cite{Liu2013c,Angulo2020,Montanari2021} or network motifs \cite{Whalen2015}. In fact, as the computation of Lie derivatives is particularly demanding for high-dimensional systems, scalable strategies have yet to be developed to investigate the functional observability of large-scale nonlinear networks.

\appendix

\section{Proof of Theorem~\ref{theor.nonlinearfuncobsv}}
\label{app.appendproof}

\begin{proof}
Given sufficiently smooth functions $f(\bm x)$ and $\bm h(\bm x)$, we show that condition \eqref{eq.functobsvcondition2} holds if and only if condition \eqref{eq.functobsvcondition} holds.

\textit{Sufficiency.} If condition \eqref{eq.functobsvcondition2} holds, then there exists some matrices $L_i\in\R^{r\times q}$, $i=1,\ldots,s$, such that
\begin{equation}
    \gradient\mathcal L_{\bm f}^0\bm g(\bm x) = \sum_{i=0}^s L_i \gradient\mathcal L_{\bm f}^i\bm h(\bm x).
    \label{eq.prooftheorem2suff}
\end{equation}

\noindent
Thus, condition \eqref{eq.functobsvcondition} holds given that $\mathcal L_{\bm f}^0\bm g(\bm x) = \bm g(\bm x)$.

\textit{Necessity.} If condition \eqref{eq.functobsvcondition} is satisfied, then relation \eqref{eq.prooftheorem2suff} holds. Right-multiplying \eqref{eq.prooftheorem2suff} by $\bm f(\bm x)$ yields
\begin{equation}
    \mathcal L_{\bm f}^1 \bm g(\bm x) = \sum_{i=0}^s L_i\mathcal L_{\bm f}^{i+1} \bm h(\bm x).
\end{equation}

\noindent
By induction, successively taking the gradient on both sides yields:
\begin{equation}
    \begin{aligned}
        \gradient\mathcal L_{\bm f}^1 \bm g(\bm x) &= \sum_{i=1}^{s+1} L_{i1} \gradient \mathcal L_{\bm f}^{i}\bm h(\bm x), \\
        &\,\,\,\vdots \\
        \gradient\mathcal L_{\bm f}^\mu \bm g(\bm x) &= \sum_{i=\mu}^{s+\mu} L_{i\mu} \gradient \mathcal L_{\bm f}^{i}\bm h(\bm x),
\label{eq.prooftheorem2nec}
\end{aligned}
\end{equation}

\noindent
for some matrices $L_{ij}\in\R^{r\times q}$, $i=1,\ldots,s$ and $j = 1,\ldots,\mu$. From Definition \ref{def.obs_space}, $\gradient\mathcal L_{\bm f}^{s+j}\bm h(\bm x)$ is a linear combination of $\{\gradient\mathcal L_{\bm f}^{0}\bm h(\bm x), \ldots, \gradient\mathcal L_{\bm f}^{s}\bm h(\bm x)\}$. Therefore, equations \eqref{eq.prooftheorem2nec} can be expressed as
\begin{equation}
    \gradient\mathcal L_{\bm f}^j \bm g(\bm x) = \sum_{i=0}^{s} L_{ij} \gradient \mathcal L_{\bm f}^{i}\bm h(\bm x), 
\end{equation}

\noindent
which implies that condition \eqref{eq.functobsvcondition2} is satisfied.
\end{proof}




\section{Coefficients of observability from time-series data}
\label{app.timeseriesobsv}

The coefficients of observability \eqref{eq.coeffuncobsv} can be indirectly inferred from time-series data by exploring the topological features associated with unobservable regions in the embedded state space. Let $Y(t)\in\R^{N}$ be the recorded time-series data, for time instants $t\in [0, N]$, and $X = [Y(t) \,\, Y(t-\tau) \,\, \ldots \,\, Y(t-(d_e-1)\tau)] \in\R^{N\times d_e}$ be the corresponding time-delay embedding for some embedding dimension $d_e$ and delay $\tau$. Methods based on singular value decomposition of embedded time-series data were shown to indirectly quantify the system's local observability, by measuring the geometrical complexity around some neighborhood of the embedded attractor to identify singularities in the embedded trajectories \cite{Aguirre2011,Portes2016}.

Here, we indirectly measure the system's observability by monitoring the smallest singular value $\sigma_{d_e}$ corresponding to the singular value decomposition $X = U\Sigma V^\transp$. Note that the subindex $d_e$ corresponds to the embedding dimension. The coefficient $\sigma_{d_e}$ is addressed as tSVD throughout the paper. To compare the local coefficient of functional observability $\kappa(t)$ at some time instant $t$ to the tSVD $\sigma_{d_e}(t)$ (Fig.~\ref{fig.ewsepileptor}), $\sigma_{d_e}(t)$ must be locally computed using a short time-series window close to the time instant $t$. In this work, we consider that $\sigma_{d_e}(t)$ is computed using the embedding of a moving time-series window of length $N$: $\{X(t-N), \ldots , X(t)\}$. Since numerical results may show high variability, we can also use a second moving average window of length $N_{\rm avg}$ to smooth the computed tSVD.


\begin{acknowledgments}
	The authors thank Dr. Cristina Donato for useful insights on seizure onset. L.F. acknowledges support from Brazil's  Fundação de Amparo à Pesquisa do Estado de Minas Gerais (APQ-00781-21) and Conselho Nacional de Desenvolvimento Científico e Tecnológico (409487/2021-0). D.P. is supported by the Luxembourg National Research Fund (PRIDE DTU CriTiCS 10907093).
\end{acknowledgments}


%

\end{document}